 \newtheorem{thm}{Theorem}
\newtheorem{lemma}[thm]{Lemma}
\newtheorem{observation}[thm]{Observation}
\newdefinition{rmk}{Remark}
\newproof{pf}{Proof}
\newproof{pot}{Proof of Theorem \ref{thm2}}
\newtheorem{reductionrule}{\textit{Reduction rule}}
\newcommand{\defproblem}[3]{
	\vspace{1mm}
	\noindent\fbox{
		\begin{minipage}{0.47\textwidth}
			\begin{tabular*}{\textwidth}{@{\extracolsep{\fill}}lr} #1 \\ \end{tabular*}
			{\bf{Input:}} #2  \\
			{\bf{Question:}} #3
		\end{minipage}
	}
	\vspace{1mm}
}
\newtheorem{proposition}[thm]{\textit{Proposition}}
\newtheorem{corollary}[thm]{\textit{Corollary}}
\newcommand{\VC}{\textsc{Vertex Cover} }
\newcommand{\RBSC}{\textsc{Red Blue Set Cover }}
\newcommand{\WRBSC}{\textsc{Weighted Red Blue Set Cover}}
\newcommand{\SC}{\textsc{Set Cover }}
\newcommand{\AHG}{Axis-parallel hyperplane }
\newcommand{\AQG}{Axis-parallel Quadrants }
\newcommand{\ASG}{Axis-parallel  Skylines }
\newcommand{\AH}{\textsc{AxRBSC-hyperplanes }}
\newcommand{\AL}{\textsc{AxRBSC-lines}}
\newcommand{\AQ}{\textsc{AxRBSC-quadrants }}
\newcommand{\AS}{\textsc{AxRBSC-skylines }}
\newcommand{\AST}{\textsc{AxRBSC-bi directional strips }}
\newcommand{\dHS}{\textsc{d-Hitting Set }}
\newcommand{\SQ}{\textsc{Square Stabbing }}
\newcommand{\RS}{\textsc{Rectangle Stabbing} }
\newcommand{\calH}{\mathcal{H}}
\journal{Information Processing Letters}
\begin{document}
	
	\begin{frontmatter}
	
		%% Title, authors and addresses
		
		%% use the tnoteref command within \title for footnotes;
		%% use the tnotetext command for theassociated footnote;
		%% use the fnref command within \author or \address for footnotes;
		%% use the fntext command for theassociated footnote;
		%% use the corref command within \author for corresponding author footnotes;
		%% use the cortext command for theassociated footnote;
		%% use the ead command for the email address,
		%% and the form \ead[url] for the home page:
		%% \title{Title\tnoteref{label1}}
		%% \tnotetext[label1]{}
		%% \author{Name\corref{cor1}\fnref{label2}}
		%% \ead{email address}
		%% \ead[url]{home page}
		%% \fntext[label2]{}
		%% \cortext[cor1]{}
		%% \address{Address\fnref{label3}}
		%% \fntext[label3]{}
		
\title{Red Blue Set Cover Problem on Axis-Parallel Hyperplanes and Other Objects}

	\author{V P Abidha\fnref{fn1} }
	\ead{abidha.vp@iiitb.ac.in}
	\fntext[fn1]{First author is supported by the TCS Research scholar program}	
	%\address[2]{International Institute of Information Technology Bangalore, India}
%	\address[1][2]{International Institute of Information Technology Bangalore, India}
%\cortext[cor]{First author is supported by TCS Research scholar program}

		\author{Pradeesha Ashok}
		\ead{pradeesha@iiitb.ac.in}
		\address{International Institute of Information Technology Bangalore, India}
		
		%% use optional labels to link authors explicitly to addresses:
		%% \author[label1,label2]{}
		%% \address[label1]{}
		%% \address[label2]{}

		\begin{abstract} 
Given a universe $\mathcal{U}=R \cup B$ of a finite set of red elements $R$, and a finite set of blue elements $B$ and a family $\mathcal{F}$ of subsets of $\mathcal{U}$, the \RBSC problem is to find a subset $\mathcal{F}'$ of $\mathcal{F}$ that covers all blue elements of $B$  and minimum number of red elements from $R$.

%In this paper we study the \RBSC problem which is a bichromatic variant of the set cover problem.

We prove that the \RBSC problem is NP-hard even when $R$ and $B$ respectively are sets of red and blue points in ${\rm I\!R}^2$ and 
the sets in $\mathcal{F}$ are defined by axis-parallel lines i.e, every set is a maximal set of points with the same $x$ or $y$ coordinate.
%$\mathcal{F}$ is collections contain maximum of set of points with same $x$ or $y$ coordinate. 
We then study the parameterized complexity of a generalization of this problem,
%the sets of $\mathcal{F}$ are orthogonal geometric objects.
 where $\mathcal{U}$ is a set of points in ${\rm I\!R}^d$ and $\mathcal{F}$ is a collection of set of axis-parallel hyperplanes in ${\rm I\!R}^d$, under different parameterizations. For every parameter, we show that the problem is fixed-parameter tractable and also show the existence of a polynomial kernel. 
 
\noindent  We further consider the \RBSC problem for some special types of rectangles in ${\rm I\!R}^2$.

\end{abstract}
		
		%%Graphical abstract

		%%Research highlights

		\begin{keyword}
			%% keywords here, in the form: keyword \sep keyword
			
			%% PACS codes here, in the form: \PACS code \sep code
			
			%% MSC codes here, in the form: \MSC code \sep code
			%% or \MSC[2008] code \sep code (2000 is the default)
Red blue set cover \sep Axis parallel hyperplane \sep Parameterized complexity \sep Polynomial kernels \sep Axis parallel quadrants.		
		
	\end{keyword}
		
	\end{frontmatter}

	%% \linenumbers
	
	%% main text

%
%
%

%--------------------------------------------------
\section{Introduction}
The \SC problem and its variants are well studied problems in Computer Science.
Given a universe $\mathcal{U}$ with $n$ elements and a family $\mathcal{F}$ of subsets of $\mathcal{U}$, the \SC problem is to find a subset $\mathcal{F}' \subseteq \mathcal{F}$ that covers all elements in $\mathcal{U}$ and $|\mathcal{F}'|$ is minimized. Here $u \in \mathcal{U}$ is said to be covered by $\mathcal{F}'$, if $\exists F \in \mathcal{F}'$ such that $u \in F$.

Researchers
have looked into many variations of this question. We study the \RBSC problem, which is a generalization of the \SC problem, where $\mathcal{U}$ is bi-chromatic.
Given a universe $\mathcal{U}=R \cup B$ of a finite set $R$ of red elements, and a finite set  $B$ of blue elements and a family $\mathcal{F}$ of subsets of $\mathcal{U}$, the \RBSC problem is to find a subset $\mathcal{F}'$ of $\mathcal{F}$ that covers all blue elements of $B$ and the minimum number of red elements from $R$. Note that here the number of sets in $\mathcal{F}'$ is not optimized. The \SC problem can be considered as a special case of the \RBSC problem where every set contains a distinct red element. This implies that the \RBSC problem is as hard as the classical \SC problem.% all the hardness results known for \SC is also applicable in \RBSC.

%Studying different geometric problems for orthogonal objects is also natural due to the ubiquitous nature of such objects in real-life applications. For example, the classical \textsc{Set Cover} problem and variants are studied for orthogonal objects like axis-parallel rectangles~\cite{Aronov}, unit squares~\cite{CHAN2015380,Mustafa}, skylines, axis-parallel hyperplane \cite{RBSC} etc. 

The \RBSC problem was introduced by Carr et al.~\cite{RBSCintro} in 2000, motivated by applications in data mining. Several algorithmic results for the \RBSC problem are given in \cite{LabelMax,C1P,DomC1P,positive_negative}. A special case of the \RBSC problem is the \textsc{Geometric \RBSC} problem, where $\mathcal{U}= R \cup B$ and a family $\mathcal{F}$ is defined by a collection of geometric objects, which contains points from  $\mathcal{U}$.
%
%Geometric variants of the \RBSC problem are also studied. 

%Many different geometrical objects are studied in Geometric \RBSC.
In the \textsc{Geometric \RBSC} problem, a wide variety of geometrical objects are studied.
Chan and Hu \cite{Chan2013GeometricRS} proved that the problem is NP-hard even when the objects are unit squares in $\mathbb{R}^2$, and gave the first polynomial-time approximation scheme (PTAS) for this problem. Pandit and Mudgal \cite{UnitDisk} have shown a constant factor approximation algorithm for the \RBSC problem on unit disks. A study of different axis-parallel systems has been conducted by RR Madireddy et al.\cite{RBSC_AX}. They showed the APX-hardness on axis-parallel set systems like axis-parallel rectangle, axis-parallel strips and axis-parallel line segments. The \textsc{Geometric} \RBSC problem is also studied in the context of parameterized complexity. 
Ashok et al.\cite{RBSC} investigates the parameterized complexity of the \RBSC problem on general lines in $\mathbb{R}^2$ under an array of parameters. % and all possible combinations of those parameters.  
% The generalization of the \RBSC defined in \cite{RBSC}, as follows. 
% \vspace{0.1cm}
%\defproblem{$\textsc{Generalized \RBSC}(\textsc{Gen-RBSC}) $}{A universe $\mathcal{U}=R \cup B$, a family  $\mathcal{F} $ of $\ell$ sets of ${\mathcal{U}}$, such that each set contain elements from ${\mathcal{U}}$, and positive integers $k_r$ and $k_l$.}{Is there a subset of $\mathcal{F'}\subseteq \mathcal{F}$ of size at most $k_l$  that covers all points in $B$ and at most $k_r$ points from $R$ ?} 
In this paper, we mainly study the parameterized complexity of \RBSC problem on axis-parallel lines in $\mathbb{R}^2$ and axis-parallel hyperplanes in $\mathbb{R}^d$. This can be seen as a continuation of the results given in \cite{RBSC}, where lines and hyperplanes of arbitrary orientation are considered. We show the existence of positive results when the objects are restricted to be axis-parallel.
\vspace{.1cm}
% and Gen-axis-parallel hyperplanes in $\mathbb{R}^d$.\\
%Ashok et al.\cite{RBSC} studied the Generalized \RBSC using lines----This is an extension of the studies on Generalized \RBSC
%This is a continuation of research on generalised \RBSC using arbitrary lines, therefore it's interesting to see how the same parameters performs differently on the same object with same $x$ or $y$ coordinates.

%If the problem is tractable, then consider polynomial kernal for that parameter. Many parameters do not admit FPT and no polynomial exists. These results also extend to higher dimensions, we consider the hyperplane, restrict to axis-parallel.

%We investigate many of these parameters have FPT algorithm in axis-parallel, even W- hard for the same parameter in \RBSC using arbitrary lines. \\

%The geometric \RBSC problem is also studied in the context of Parameterized complexity. In~\cite{RBSC}, several parameterizations of the \RBSC problem for lines in $\mathbb{R}^2$ are studied. 

\noindent\textbf{Problems studied and results :}

 \vspace{0.2cm}
\noindent We study the following decision variant of the \RBSC problem. 

 \vspace{0.1cm}
\defproblem{$\textsc{\RBSC}(\textsc{RBSC})$}{A universe $\mathcal{U}=R \cup B$, a family  $\mathcal{F} \subseteq 2^{\mathcal{U}}$, an integer $k_r$.}{Is there a subset $\mathcal{F'}\subseteq \mathcal{F}$  that covers all points in $B$ and at most $k_r$ points from $R$ ?}  

%\noindent We consider the following different variant of axis-parallel objects of the \RBSC problem.

\noindent We consider the \RBSC problem for the following objects.
\begin{enumerate}
\item \AL~: We study the \RBSC problem when $\mathcal{U}$ is a set of points in $\mathbb{R}^2$ and $\mathcal{F}$ is defined by a set of axis-parallel lines(refered to as \AL). We show that \AL~is NP-complete. See Section \ref{NPC}.
\item \AH : We study a generalization of \AL~to $d$ dimensions. We consider the problem where $\mathcal{U}$ is a set of points in $\mathbb{R}^d$ and $\mathcal{F}$ is defined by a set of axis-parallel hyperplanes. This problem is NP-complete since \AL~is NP-complete. We study the parameterized complexity of \AH under different parameterizations. For each parameter, we consider whether the problem is fixed parameter tractable and if yes, we investigate whether the problem admits polynomial kernels. %We show that the \RBSC problem admits more positive results when restricted to axis-parallel hyperplanes as compared to hyperplanes of any orientation. 
See Section \ref{hyper}. A comparison of results for arbiitrary hyperplanes and axis-parallel hyperplanes is given in Table \ref{TT1}.
\end{enumerate}
\noindent We further consider some special types of rectangles as follows.
\begin{enumerate}[resume]
\item \AS : Bi-directional strips refer to axis-parallel rectangles which are unbounded on two opposite sides. They are either of the form $[a,b]\times (-\infty,\infty)$ or $ (-\infty,\infty) \times [a,b]$. Note that axis-parallel lines is a special case of bi-directional strips and hence the \RBSC problem on bi-directional strips is NP-complete. We study the \RBSC problem on a generalization of strips called skylines (referred to as \AS) and show that this problem is W-hard and therefore unlikely to admit FPT algorithms. See Section \ref{Sk}.
\item \AQ : We further consider quadrants, a special case of skylines and show that the \RBSC problem on quadrants is polynomial time solvable. See Section \ref{Qdr}.
\end{enumerate}
%We prove that the \RBSC problem is NP-hard even when the set $\mathcal{U}$ is a set of points in $\mathbb{R}^2$ and $\mathcal{F}$ is defined by a set of axis-parallel lines. (See Section ??). We study the parameterized complexity of the generalization of this problem to $d$ dimensions. We consider the problem where the set $\mathcal{U}$ is a set of points in $\mathbb{R}^d$ and $\mathcal{F}$ is defined by a set of axis-parallel hyperplanes. This problem is referred to as 
%

%Table \ref{TT1} shows, comparison between \RBSC problem on axis-parallel and non-axis-parallel hyperplanes. 

\begin{table}[h!]
	\centering
	\begin{tabular}{|c|c|c|}
		\hline
		\textbf{Parameter} & \textbf{Axis-parallel}  & \textbf{Arbitrary }   \\ 
		\hline
		$k_r$ & FPT and poly.kernel & W[2]-hard\\  
		\hline
		$b$& FPT and poly.kernel & W[1]-hard\\
		\hline
		$l$& FPT and poly.kernel & FPT and poly.kernel\\  
		\hline
		$r$& FPT and poly.kernel & FPT and no poly.kernel\\  
		\hline 
			\end{tabular}
	\vspace{0.03cm}
	\caption{Comparison of results for arbitrary hyperplanes(taken from \cite{RBSC}) and axis-parallel hyperplanes.}
	\label{TT1}
\end{table} 
\section{Preliminaries}
In this section, we give definitions and results that will be used in subsequent sections.

\noindent The formal definitions of objects that we consider are as follows : \\
%When $F$ is axis-parallel hyperplane, we call it as \AH, similarly for skylines and quadrants in ${\rm I\!R}^2$ we call  \AS and \AQ respectively.\\ 
\noindent\textbf{\AHG}: A hyperplane $H$ is a subspace whose dimension is one less than that of its ambient space. Here we consider hyperplanes parallel to any one of the axis in ${\rm I\!R}^d$ .\\
%\noindent \textbf{\ASTG}: Let $S_v$ and $S_h$ be the horizontal and vertical strips $(a,b)$ is a set of points $(x,y) \in {\rm I\!R}^2$, such that $a \leq y \leq b$ and $a \leq x \leq b$ respectively. We call \RBSC problem on axis-parallel bi directional strips, as \AST in ${\rm I\!R}^2$.
\noindent\textbf{\ASG}: Skylines are axis-parallel rectangles that are unbounded in one direction. Every skyline is bounded by three axis-parallel edges. We define a horizontal skyline as a rectangle of the form $(-\infty,x]\times [y_1,y_2]$ and a vertical skyline as a rectangle of the form  $[x_1,x_2]\times(-\infty,y]$ in ${\rm I\!R}^2$. \\
\noindent\textbf{\AQG}: Quadrants are axis-parallel rectangles that are bounded by two adjacent axis-parallel edges. We define \emph{quadrants of the first type}, as rectangles of the form $[x,+\infty)\times [y,+\infty)$, \emph{quadrants of the second type} as rectangles of the form  $(-\infty,x]\times [y,+\infty),$  \emph{quadrants of the third type} as rectangles of the form  $(-\infty,x]\times [y,-\infty)$ and \emph{quadrants of the fourth type} as rectangles of the form  $[x,+\infty)\times [y,-\infty)$. In this paper, we consider quadrants of the first type in ${\rm I\!R}^2$. Note that the results on  quadrants of the first type can be extended to quadrants of other types too.

\vspace{.15cm}
In this paper, we study the parameterized complexity of the problems using different parameters.

\noindent\textbf{Parameterized complexity \cite{PACygan} : }Parameterized complexity was introduced as a means to tackle NP-hardness. A parameterized problem instance is a pair $(\Pi,k)$ where $\Pi$ is the actual problem instance and $k$ is called a parameter. $(\Pi,k)$ is said to be \emph{fixed parameter tractable (FPT)} if there exists an algorithm that solves $\Pi$ in $O(f(k)|\Pi|^{O(1)})$ time, where $f(k)$ is a computable function independent of $|\Pi|$. 

Parameterized complexity is interesting when a problem is known to be NP hard. Here we aim to restrict the combinatorial explosion to a parameter, where the parameter is much smaller than the size of input.% Now we give definitions from Parameterized complexity. (For detailed reading of parameterized complexity refer~\cite{PACygan}.) \\

\vspace{.05cm}
\noindent\textbf{W-Hierarchy :} Theory of intractability of parameterized problems orders the problems into a hierarchy called the W-hierarchy based on its complexity. It is organized as FPT $\subseteq W[1] \subseteq W[2] \cdots $. Under standard complexity theoretical assumptions, a problem which is $W[i]$-hard does not admit FPT algorithms, for $i >0$. To show intractability there exists a concept of parameterized reduction. % which is similar to that of polynomial reductions. 
%
%Let $P,Q \subseteq \sum^{*} \times \mathbb{N}$ be two parameterized problems. Let $(x,k)$ and $(x^\prime{},k^\prime{})$ be instances of $P$ and $Q$ respectively. A parameterized reduction from $P$ to $Q$,  a function $\phi$ with the following properties:

This technique can be used to transforms an instance of a parameterized problem $\Pi_A$ into an instance of a parameterized problem $\Pi_B$, mapping yes-instances for $\Pi_A$ to yes-instances for $\Pi_B$ and vice versa. If this transformation can be done in fixed-parameter tractable time, this implies that if $\Pi_B$ is fixed-parameter tractable, then so is $\Pi_A$; equivalently, if $\Pi_A$ is not fixed-parameter tractable, then neither is $\Pi_B$. This reduction is similar to that of polynomial reduction in classical complexity. Parameterized reductions are a powerful technique for relating problems to each other.

%\begin{itemize}
%	\item $\phi(x)$ is a yes-instance of $Q \Leftrightarrow x$  is a yes-instance of $P$.
%	\item $\phi(x)$  can be computed in time $O(f(k)|x|^{O(1)})$, where $k$ is the parameter of $x$,
%	\item If $k$ is the parameter of $x$ and $k^\prime{}$ is the parameter of $\phi(x)$, then $ k^\prime{} \leq g (k)$  for some function of $g$.
%%
%\end{itemize}
%If there is a \emph{fixed parameter tractable}(FPT) reduction from $P$ to problem $Q$, and if  $Q$ is FPT, then $P$ is also FPT. 
Another important concept related to parameterized complexity is kernelization.

\vspace{.05cm}
\noindent\textbf{Kernelization :}
%A kernelization algorithm, for a parameterized problem $\Pi$, is an algorithm $A$
%Preprocessing or data reductions means reducing the input to something simpler by solving an easy part of the input and this is the type of algorithms used in almost every application. In spite of wide practical applications of preprocessing, a systematic theoretical study of such algorithms remains elusive. The framework of parameterized complexity can be used as an approach to analyse preprocessing algorithms. Input to parameterized algorithms include a parameter (in addition to the input) which is likely to be small, and this resulted in a study of preprocessing algorithms that reduce the size of the input to a pure function of the parameter (independent of the input size). Such type of preprocessing algorithms are called kernelization algorithms.
A kernelization algorithm($A$) is usually a set of reduction rules that, given an instance $(I,k)$ of a parameterized problem $\Pi$, runs in polynomial time and returns an equivalent instance $(I',k')$ of $\Pi$. Moreover, we require that $|I'|+|k'| \leq g(k)$ for some computable function $g$ depending on $k$. %a pair $(I', k')\in \Pi$ such that:
%A kernalization algorithm for a parameterized problem is a language $\Pi \subseteq \Sigma \times {\rm I\!N}$, where $\Sigma$ is a finite alphabetand ${\rm I\!N}$ is the set of non-negative integers, is an algorithm that, given $(x,k) \in \Sigma \times {\rm I\!N}$, outputs, in time polynomial in $(|x| + k)$, a pair $(x', k')\in \Sigma \times {\rm I\!N}$ such that:

%\begin{itemize}
%\item $(I,k) \in \Pi$ if and only if $(I', k') \in \Pi$.
%\item $|I'|, k' \leq g(k)$, where $g$ is some computable function.%\end{itemize} 

The output instance $I'$ is called the kernel, and the function $g$ is referred to as the size of the kernel. If $g(k)=k^{O(1)}$, then we say that $\Pi$ admits a polynomial kernel.  The set of reduction rules used in kernelization is said to be \textit{safe}, if it converts a problem instance to an equivalent instance.

One important result in parameterized complexity is that a parameterized problem $\Pi$ is FPT if and only if it admits a kernel. Thus for a problem which is in FPT, an interesting question is to know whether there exists a polynomial kernel or not.

Another technique used in  designing of parameterized algorithms is Bounded Search Trees.
%The idea here is to first identify, in polynomial time, a small (typically a constant, but even logarithmically many is also fine) subset of elements of which at least one (or a subset of smaller size) must be in any feasible solution of the problem. Then we include one of them at a time and recursively solve the remaining problem with the parameter value reduced by 1. If we unravel the recursion tree, the depth of this tree will be bounded by k with the branching factor bounded by the size of the small subset identified. This will bound the total size of the tree as a function of k and as in each node a polynomial time is spent, the resulting algorithm is a fixed parameter algorithm.

\vspace{.05cm}
\noindent\textbf{Bounded Search Trees:} In this technique, we try to build a feasible solution to the parameterized problem by making a sequence of decisions.% which contain at least one element from a small subset that can be identified in polynomial time. 
Nodes of a search tree correspond to problem instances, and children of a node correspond to a number of decisions that can be taken during the search. % The branching factor of a search tree node is the number of subtrees rooted at that node. The branching factor of a search tree algorithm is the maximum branching factor taken over all the nodes in the search tree of that algorithm. The depth of the tree will be bounded by the parameter $k$ and the branching factor by the size of the small subset identified.
While designing an FPT algorithm, if we unravel the recursion tree, the depth of this tree  and branching factor will be bounded by a function of  $k$. % with the branching factor bounded by the size of the small subset identified.
This will bound the total size of the tree as a function of $k$ % and as in each node a polynomial time is spent, 
and the resulting algorithm is a fixed parameter tractable algorithm.
% This will bound the total size of the tree as a function of $k$ and as in each node a polynomial time is spent, the resulting algorithm is a ﬁxed parameter algorithm.

%Here, the total time taken by the algorithm depends on the total number of nodes in the tree and the time taken at individual nodes.
% If the total size of the search tree is bounded by a function of the parameter alone, and every step takes polynomial time, then such a branching algorithm runs in FPT time.

%In this paper, we study the parameterized complexity of the problems using different parameters. Here we aim to solve the problems in $f(k)\cdot {|n|}^{O(1)}$ time, for some computable function $f$ and $k<<n$.
%Here we aim to restrict the combinatorial explosion to a parameter, where the parameter is much smaller than the size of input.
%For detailed reading of parameterized complexity, refer to~\cite{PACygan}.\\

\noindent Now we give some more definitions and results.

\vspace{.05cm}
\noindent\textbf{Orthogonal Convex Hull}: A simple polygon $C$ is said to be \emph{orthogonal convex} if every vertical or horizontal line intersects the boundary of $C$ at zero, one or two points. Orthogonal convex hull of a point set $P$ is defined as the smallest area orthogonal convex polygon that contains $P$. 
%An $O(n \log n)$ time algorithm to find the orthogonal convex hull of a set of $n$ points is given in~\cite{orthoConvex}. 
An algorithm that finds the\textit{ orthogonal convex hull} of a set of $n$ points in $O(n \log n)$ time is presented in \cite{orthoConvex}.

%Orthogonal convex hull of point set $P$ is defined as the smallest area orthogonal convex polygon that contains $P$. 

%The orthogonal convex hull of a digital object S, denoted by OH(S), is the smallest area orthogonal polygon such that:\\
%$(i)$ each point $p \in S$ lies lies inside OH(S) and \\
%$(ii)$ intersection of OH(S) with any horizontal or vertical line is either empty or exactly one line segment.

\noindent We define the \textsc{Red Blue Set Cover} problem studied in weighted settings as follows.

\defproblem{$\textsc{\WRBSC}$}{A bi-chromatic point set $\mathcal{U}$ of blue and red elements and a family $\mathcal{F}$ of subsets of $\mathcal{U}$, a weight function $w : R\rightarrow \mathbb{N}$, an integer $k$.}{Is there a subfamily of $\mathcal{F}$ that covers all blue points and total weight of red points covered is less than or equal to $k$?}

\vspace{.1cm}
\noindent We define the \textsc{\dHS} problem as follows.

\vspace{.1cm}
\defproblem{$\textsc{\dHS}$}{A collection $\mathcal{F}$ of subsets of size $d$ of a finite set $S$, and a positive integer $k$.}{Does there exist a subset $S' \subseteq S$ with $|S'| \leq k$ such that $S'$ contains at least one element from each subset in $\mathcal{F}$ ?}

\begin{proposition}
	\label{DH}
 \cite{3hittingset} The \dHS problem can be solved in $O({c_d}^k + n)$ time, where  $c_d = \frac{d-1+\sqrt{(d-1)^2+4}}{2} $.
\end{proposition}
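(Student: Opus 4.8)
The plan is to solve \dHS by a bounded-search-tree (branching) algorithm in which the parameter $k$ is the measure, and to show that a carefully chosen branching rule yields a search tree with only $O(c_d^{\,k})$ leaves, each processed in time independent of $n$ after a single linear-time preprocessing pass; this gives the stated $O(c_d^{\,k}+n)$ bound.

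First I would fix a set of reduction rules, each applicable in linear time with suitable data structures, and let the algorithm work with an instance whose sets have size \emph{at most} $d$ (they start with size exactly $d$, but branching will shrink them): declare a NO-instance if the empty set occurs or $k<0$; if a singleton set $\{x\}$ occurs, put $x$ into the solution, delete every set containing $x$, and decrease $k$ by one; delete a set that contains another set, and delete duplicate sets; delete an element occurring in no set. After these rules are applied exhaustively, every remaining set has size between $2$ and $d$, and if all sets have size at most $d-1$ we are already in an easier subproblem (the case $d=1$ being trivial).

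The branching rule picks a remaining set $F=\{x_1,\dots,x_d\}$ of maximum size and a good pivot, and branches on which element of $F$ enters the solution, using a fixed order of $F$ to record exclusions: in the $i$-th branch one adds $x_i$ to the solution ($k\to k-1$) and, since $x_1,\dots,x_{i-1}$ are now excluded, deletes them from the whole instance, which shrinks every other set meeting $\{x_1,\dots,x_{i-1}\}$ and may trigger the singleton rule again. The naive version of this rule only gives an $O(d^{\,k})$ tree; the improvement comes from also pivoting on an element lying in several sets, so that excluding it drives several sets toward size $1$ and forces further choices, together with the classical golden-ratio branching once only size-$2$ sets remain. A case analysis of the resulting branching vectors shows that the worst recurrence is $T(k)\le (d-1)\,T(k-1)+T(k-2)$, whose characteristic equation $x^2=(d-1)x+1$ has positive root $c_d=\tfrac{d-1+\sqrt{(d-1)^2+4}}{2}$; hence the search tree has $O(c_d^{\,k})$ leaves. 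The sanity checks $c_1=1$ and $c_2=\tfrac{1+\sqrt5}{2}$ (the textbook \VC branching) confirm the formula.

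The main obstacle is exactly this last step: turning the coarse $O(d^{\,k})$ branching into one meeting the recurrence $T(k)\le (d-1)\,T(k-1)+T(k-2)$ requires a detailed case distinction on how the chosen set and pivot interact with the rest of the family (equivalently, a refined measure that awards a bonus for short sets), and one must check that in every case not already giving a strictly better recurrence the extra forced move is indeed available. The remaining bookkeeping is routine: the reduction rules are run once at the root in $O(n)$ time, thereafter each search-tree node does only $O(d^2)=O(1)$ work to select $F$ and branch while the instance only shrinks, so the total running time is $O(c_d^{\,k}+n)$.
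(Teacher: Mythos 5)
The paper itself gives no proof of this proposition: it is imported verbatim from Niedermeier and Rossmanith \cite{3hittingset}, so there is no in-paper argument to compare yours against; what matters is whether your reconstruction would actually establish the stated bound, and as written it does not. The heart of the claim is precisely the step you defer: you assert that "a case analysis of the resulting branching vectors shows that the worst recurrence is $T(k)\le (d-1)T(k-1)+T(k-2)$", but this case analysis is the entire content of the result, and the pivoting strategy you describe does not obviously deliver it. Branching over the elements of a maximum-size set with exclusions gives, in general, $d$ branches each decreasing $k$ by exactly $1$ (the $O(d^k)$ bound); to do better you must show that in every configuration either some branch forces a second pick (dropping $k$ by $2$) or a reduction rule applies. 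For instance, if every element of the chosen set occurs in no other set, no exclusion shrinks anything and your scheme yields only the trivial $(1,1,\dots,1)$ vector; handling this requires a domination-type rule (if every set containing $x$ also contains $y$, delete $x$), which is central to the cited algorithm but absent from your rule list, and even with it one must verify all remaining overlap patterns. Without that analysis the claimed characteristic root $c_d$ is unsupported.

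The second gap is the additive $+n$ in the running time. Your accounting ("each search-tree node does only $O(d^2)=O(1)$ work... so the total running time is $O(c_d^k+n)$") is not correct: adding a pivot to the solution or excluding it requires deleting it from, or removing, all sets containing it, which costs time proportional to its number of occurrences and can be $\Theta(n)$ at a single node; a one-time linear preprocessing at the root does not prevent this. A naive implementation therefore gives $O(c_d^k\cdot n^{O(1)})$, not $O(c_d^k+n)$. Obtaining the additive form is a genuine contribution of \cite{3hittingset}, achieved by interleaving kernelization with the search tree (rekernelizing so that instances deep in the tree have size bounded by a function of $k$ and $d$ alone), or by an equally careful amortization; some such mechanism must be stated and justified. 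In short, your proposal correctly identifies the shape of the argument (reduction rules plus a refined bounded search tree whose recurrence has root $c_d$), which is indeed the approach of the cited source, but both the branching-vector case analysis and the linear additive term are asserted rather than proved, and these are exactly the nontrivial parts of the proposition.
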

%
%  
%---------------------------------------------------------
\noindent\textbf{Notations:} 
$P=R \cup B$ is a point set, where $R$ is a set of $r$ red points and $B$ is a set of $b$ blue points and $|R|+|B|=n$. 
For any point $p$, let $x(p)$ and $y(p)$ denote the $x$ and $y$ coordinate values of $p$ respectively. Let $[d]$ be the set, $\{1, 2, \cdots, d\}$ throughout the paper.
Also we use the notation $O^{*}$ that hides the polynomial factor in the running time.
%In order to hide the polynomial factor in the running time, we use the notation $O^*$.

\vspace{0.2cm}
\noindent We use following reduction rules exhaustively on any instance of the \RBSC problem.
\begin{reductionrule}
	\label{Reduction rule 1}
	If a blue element $b$ is contained in only one set $S \in \mathcal{F}'$,
where $\mathcal{F}' \subset \mathcal{F}$ is a solution that covers all blue points and $k_r$ red points, then delete $S$ from $\mathcal{F}'$, $S \cap B$ from $B$ and $S \cap R$ from $R$, set $k_r= k_r-|S \cap R|$.
\end{reductionrule}
\begin{reductionrule}
	\label{Reduction rule 2}
	If a set $S$ contains only red points, then delete $S$.% The new instance will be $(R\backslash(|S \cap R|) \cup B, \mathcal{F}\backslash S ,k_r)$. 
\end{reductionrule}
\begin{reductionrule}
	\label{Reduction rule 3}
	If a set $S$ contains only blue points, then delete $S$ from $\mathcal{F}$ and $S \cap B$ from $B$. 
\end{reductionrule}
\begin{reductionrule}
	\label{Reduction rule 4}
	If a set $S$ contains more than $k_r$ red points, then delete $S$ from $\mathcal{F}$. 
\end{reductionrule}

%\noindent A reduction rule is said to be safe, if it converts a problem instance to an equivalent instance.

\begin{lemma}
	\label{RRProof}
	Reduction rules \ref{Reduction rule 1},\ref{Reduction rule 2} ,\ref{Reduction rule 3} and \ref{Reduction rule 4} are safe.
\end{lemma}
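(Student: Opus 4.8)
The plan is to verify the four rules one at a time, in each case exhibiting an equivalence between the feasible solutions of the original instance $(R,B,\mathcal{F},k_r)$ and those of the reduced instance; recall that a rule is \emph{safe} exactly when the two instances have the same answer, so it suffices to show that each is a yes-instance iff the other is. In every case one direction is essentially trivial (adding back a deleted set, or not using it), so the work is in the other direction, and the only real bookkeeping concerns how the universe and the budget change.

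I would dispatch Reduction rule~\ref{Reduction rule 2} and Reduction rule~\ref{Reduction rule 4} first. If $S$ contains only red points, then for any feasible solution $\mathcal{F}'$ the family $\mathcal{F}'\setminus\{S\}$ still covers every blue point (since $S$ covered none) and covers a subset of the red points covered by $\mathcal{F}'$; hence deleting $S$ from $\mathcal{F}$ preserves yes-instances, and the converse is immediate. If $|S\cap R|>k_r$, then any solution containing $S$ covers more than $k_r$ red points, so no feasible solution uses $S$, and deleting it from $\mathcal{F}$ changes nothing.

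For Reduction rule~\ref{Reduction rule 3}, let $S$ contain only blue points; the reduced instance has universe $R\cup(B\setminus S)$, family $\mathcal{F}\setminus\{S\}$, and the same budget $k_r$. Given a solution of the original instance, removing $S$ from it still covers every point of $B\setminus S$ — each such point lies outside $S$, hence is covered by some other chosen set — and introduces no new red points, so it is a solution of the reduced instance. Conversely, given a solution $\mathcal{G}$ of the reduced instance, $\mathcal{G}\cup\{S\}$ covers all of $B$ and, since $S$ has no red points, covers exactly the red points covered by $\mathcal{G}$; so it is a solution of the original instance.

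The main case — and the step I expect to require the most care — is Reduction rule~\ref{Reduction rule 1}, because it shrinks the universe and the budget simultaneously. The key observation is that if a blue element $b$ lies in only one set $S$, then $S$ belongs to every feasible solution, so its red points can be ``paid for'' up front; the reduced instance is $(R\setminus S,\ B\setminus S,\ \mathcal{F}\setminus\{S\},\ k_r-|S\cap R|)$. Given a feasible solution $\mathcal{F}'$ of the original instance, necessarily $S\in\mathcal{F}'$, and $\mathcal{G}=\mathcal{F}'\setminus\{S\}$ covers every point of $B\setminus S$ (each is covered by a member of $\mathcal{F}'$ other than $S$); moreover the red points covered by $\mathcal{F}'$ include all of $S\cap R$, so at most $k_r-|S\cap R|$ of them lie outside $S$, and the red points of $R\setminus S$ covered by $\mathcal{G}$ form a subset of those, so $\mathcal{G}$ respects the reduced budget. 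The reverse direction is symmetric: from a solution $\mathcal{G}$ of the reduced instance, $\mathcal{G}\cup\{S\}$ covers all of $B$, and the red points it covers are $S\cap R$ together with the at most $k_r-|S\cap R|$ red points of $R\setminus S$ covered by $\mathcal{G}$, disjoint sets summing to at most $k_r$. I would also remark that deleting the red points $S\cap R$ from $R$ only shrinks the other sets and never affects which remaining blue points they cover, so no further adjustment to $\mathcal{F}$ is needed; the four statements together then give the lemma.
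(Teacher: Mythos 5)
Your proposal is correct and follows essentially the same route as the paper: Rule 1 is handled by noting $S$ is forced into every solution (with the budget adjusted by $|S\cap R|$), Rule 2 by discarding the all-red set, Rule 3 by adding the all-blue set back, and Rule 4 by the budget bound. Your write-up merely spells out both directions of the equivalence and the bookkeeping that the paper leaves implicit.
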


%\textbf{Proof of Lemma \ref{RRProof}}
\begin{proof}
	The safety of Reduction rule \ref{Reduction rule 1} follows from the fact that if $S$ is not part of the solution family, then there is no other set to cover $b$.  
	If $S \in \mathcal{F}$ contains only red points and $\mathcal{F}' \subset \mathcal{F}$ is a solution that covers all blue points and $k_r$ red points such that $S \in \mathcal{F}'$, then $\mathcal{F}'\backslash \{S\}$ covers all blue points and at most $k_r$ red points. This shows that Reduction rule \ref{Reduction rule 2} is safe. Similarly, if $S$ contains only blue points and $\mathcal{F}'$ is a solution that covers all blue points and $k_r$ red points then $\mathcal{F}' \cup \{S\}$  is also a solution that covers all blue points and $k_r$ red points. Thus, Reduction rule \ref{Reduction rule 3} is safe. The safety of Reduction rule \ref{Reduction rule 4} follows from the fact that the budget is at most $k_r$ and if a set contains more than $k_r$ red points, then it can not be a part of the solution.
\end{proof}

%------------------------------------------
\section{NP-completeness}
\label{NPC}

%\begin{figure}
%	\begin{minipage}[b]{0.28\linewidth}
%		\centering
%		%\includegraphics[width=\textwidth]{orthogonal.pdf}}
%		\includegraphics[scale=0.44]{graph}
%		%    \scalebox{0.40}{\input{square_annulus1.pdf_t}}
%	    \caption{graph G with 10 vertices}.
%		\label{fig1}
%	\end{minipage}
%	\hspace{1.2cm}
%\begin{minipage}[b]{0.55\linewidth}
%	\centering
%	%\includegraphics[width=\textwidth]{edge_rect.pdf}
%	\includegraphics[scale=0.35]{grid}
%	\caption{Reduced instance of RBSC on axis-parallel lines}
%	
%	\label{fig2}
%		\end{minipage}
%\end{figure}
%
\begin{thm}
	\AL~is NP-complete.
\end{thm}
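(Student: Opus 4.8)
The plan is to prove that \AL~is in NP and then that it is NP-hard, via a polynomial-time reduction from \VC. Membership in NP is immediate: a subfamily $\mathcal{F}'\subseteq\mathcal{F}$ is a polynomial-size certificate, and one checks in polynomial time that every blue point lies on some line of $\mathcal{F}'$ and that at most $k_r$ red points lie on lines of $\mathcal{F}'$. For the hardness, let $(H=(V,E),k)$ be an instance of \VC; we may assume that $H$ has no isolated vertices, and we identify $V$ with $\{1,\dots,n\}$ where $n=|V|$. The idea is to encode the incidence structure of $H$ into an integer grid. For every vertex $v\in V$ we place a single red point $r_v=(v,v)$, so that the vertical line $x=v$ and the horizontal line $y=v$ each carry exactly one red point and intersect only at $r_v$; for every edge $\{u,v\}\in E$ we place two blue points, one at $(u,v)$ and one at $(v,u)$. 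We set $k_r:=k$. The only axis-parallel lines through a point of this set are the $n$ vertical lines $x=v$ and the $n$ horizontal lines $y=v$, so the resulting \RBSC family has $2n$ sets; the construction is plainly polynomial.

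The statement I would then prove is that this instance admits a solution covering at most $k_r$ red points if and only if $H$ has a vertex cover of size at most $k$. For the forward direction, from a vertex cover $S$ with $|S|\le k$ take $\mathcal{F}'=\{\,x=v,\ y=v:v\in S\,\}$. For an edge $\{u,v\}$ with (say) $u\in S$, the blue point $(u,v)$ is covered by the line $x=u$ and the blue point $(v,u)$ by the line $y=u$, so all blue points are covered; the red points on these lines are exactly $\{r_v:v\in S\}$, at most $|S|\le k$ in number. For the converse I would first normalize an arbitrary feasible $\mathcal{F}'$: as long as $\mathcal{F}'$ contains one of the twin lines $x=v$, $y=v$ but not the other, add the missing twin. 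Since the two twins meet only at $r_v$, which is already covered by the twin already present, this operation keeps $\mathcal{F}'$ feasible and does not change the number of covered red points. After normalization $\mathcal{F}'$ equals $\{\,x=v,\ y=v:v\in S\,\}$ for some $S\subseteq V$; its covered red points are exactly $\{r_v:v\in S\}$, so $|S|\le k_r=k$, and feasibility forces, for every edge $\{u,v\}$, that $x=u\in\mathcal{F}'$ or $y=v\in\mathcal{F}'$, i.e.\ $u\in S$ or $v\in S$. Hence $S$ is a vertex cover of size at most $k$, and the equivalence follows.

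I expect the main obstacle to be precisely this normalization lemma — the assertion that an optimal solution may be taken to pick the twin lines $x=v$ and $y=v$ together — since it is what makes the red-point budget collapse exactly to the size of a vertex subset of $H$, and it depends entirely on the design choice that each pair of twins meets in a single red point (hence one red point per vertex, placed on the diagonal). Granting that, the two implications and the polynomial time bound are routine.
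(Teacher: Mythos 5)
Your proposal is correct and follows essentially the same route as the paper: a polynomial reduction from \VC{} with one red point per vertex placed on the diagonal, blue points at the grid intersections encoding edges, and the observation that covering a blue point forces covering the red point of one of its two endpoints. The only cosmetic differences are that the paper places a single blue point per edge (at one of the two intersections) rather than two, and it dispenses with your twin-line normalization by arguing directly that the red points covered by any feasible solution already form a vertex cover.
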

\begin{proof}
	We give a reduction from the \VC{} problem. Let $G=(V, E)$ be an instance of the \VC problem, where $|V|=n$ and $|E|=m$. We give an instance of \AL, $(R \cup B, \mathcal{L})$, corresponding to $G$. For every $v_i \in V$, we add a red point $r_i$ to $R$ with coordinate $(i, i) \in {\rm I\!R}^2$. Add horizontal and vertical lines that pass through $r_i$, for $1 \leq i \leq n$, to $\mathcal{L}$. For every edge $e(v_i,v_j) \in E$ (assume $i< j$), we add a blue point $b_{ij}$ to $B$ at the intersection of the horizontal line that passes through $r_i$ and the vertical line that passes through $r_j$.

	We now claim that $G=(V, E)$ admits a vertex cover of size $k$ if and only if $(R \cup B, \mathcal{L})$ admits a solution of size $k$. Let $V' \subset V$ be a vertex cover of $G$ of size $k$. Without loss of generality, let $V' = \{v_1, v_2, . . , v_k\}$. Now the set of lines in $\mathcal{L}$ that pass through the points $r_i$, $1 \leq i \leq k$, covers all the blue points and $k$ red points.
	
%	  For each $v_i$ in the solution cover all edges with end point $v_i$, Corresponding $r_i$ we select horizontal and vertical lines that passes through the $r_i$ it contain all blue points corresponding to all edges with end point $v_i$.
	
	Conversely, suppose $(R \cup B, \mathcal{L})$  has a solution that covers at most $k$ red points,
	say, $R' = \{r_1, r_2, . . , r_k\}$. Then the vertices of $V$ corresponding to red points in $R'$ is a vertex cover of $G$.  %This can be verify in polynomial time.	
	
\AL~is in NP since for a given solution of \AL, it easy to verify whether the solution covers all blue points and covers at most $k$ red points in polynomial time.	
%For each $r_i$ in the solution cover all blue points on the horizontal and vertical lines that passes through the point $r_i$. This blue points are corresponding to all edges with end point $v_i$. 
	%Therefore,  \VC with solution size $k$ if and only if there exist lines in \AL instance $(R \cup B, \mathcal{F})$ that contain all blue points and $k$ redpoints.
	
%	It is clear that the minimum vertex cover of the graph $G$ cover all blue points and minimum number of red points corresponding to the \Al. 
%Now the result follows from the fact that \VC is NP-complete~\cite{VC}. 
	
\noindent Now the result follows from the fact that \VC is NP-complete~\cite{VC}.
\end{proof}

\begin{corollary}
	\label{Cor1}
	\AH  is NP complete.
\end{corollary}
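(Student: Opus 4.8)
\textbf{Proof proposal for Corollary \ref{Cor1}.} The plan is to observe that \AH is a generalization of \AL and invoke the theorem just proved. Concretely, I would first note that when $d=2$ a hyperplane in ${\rm I\!R}^2$ is exactly a line, and an axis-parallel hyperplane in ${\rm I\!R}^2$ is exactly an axis-parallel line; therefore every instance $(R \cup B, \mathcal{L})$ of \AL is, verbatim, an instance of \AH with ambient dimension $2$, and the two problems coincide on such instances. Since the theorem establishes that \AL is NP-hard, this embedding shows immediately that \AH is NP-hard as well (the identity map is a trivial polynomial-time reduction from \AL to \AH).

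Next I would argue membership in NP. Given an instance of \AH together with a proposed subfamily $\mathcal{F}' \subseteq \mathcal{F}$, one checks in polynomial time that every blue point of $B$ lies in some set of $\mathcal{F}'$ and that the total number of distinct red points of $R$ contained in $\bigcup \mathcal{F}'$ is at most $k_r$; each test is a straightforward scan over the $O(n)$ points and the sets of $\mathcal{F}'$. Hence \AH is in NP, and combined with the hardness argument above it is NP-complete.

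There is essentially no substantive obstacle here: the whole content is the observation that restricting $d$ to $2$ turns \AH into \AL, so the corollary is a one-line consequence of the theorem. If one prefers to avoid even invoking the dimension-$2$ specialization, the same conclusion also follows by the remark in the problem description that axis-parallel lines are a special case of the hyperplane family considered; either phrasing suffices.
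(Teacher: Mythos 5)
Your proposal is correct and matches the paper's reasoning: the paper justifies this corollary exactly by noting that \AH generalizes \AL (the $d=2$ case), so NP-hardness is inherited from the preceding theorem, and membership in NP is the same routine verification. Nothing further is needed.
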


\begin{corollary}
	 \AST is NP complete. 
\end{corollary}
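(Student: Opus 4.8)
The statement to prove is that \AST~(i.e., \AHG~RBSC on bi-directional strips) is NP-complete. The plan is to observe that this follows immediately by restriction: an axis-parallel line in $\mathbb{R}^2$ is the degenerate case of a bi-directional strip where the two bounding parallel edges coincide (i.e., $[a,a] \times (-\infty,\infty)$ or $(-\infty,\infty) \times [a,a]$). Hence every instance of \AL~is, verbatim, an instance of \AST, and the reduction from \VC~built in the preceding theorem already exhibits the hardness. So the core of the argument is just: (i) note that the family of bi-directional strips contains all axis-parallel lines as a special case, so \AST~generalizes \AL; (ii) invoke the theorem that \AL~is NP-complete; (iii) conclude NP-hardness of \AST, and separately verify membership in NP.

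First I would spell out step (i): given the \AL~instance $(R \cup B, \mathcal{L})$ produced in the reduction from \VC, replace each horizontal line $y = i$ through $r_i$ by the strip $(-\infty,\infty) \times [i,i]$ and each vertical line $x = i$ by the strip $[i,i] \times (-\infty,\infty)$. This transformation is trivial and clearly preserves the set of points contained in each object, so a subfamily covers all blue points and at most $k_r$ red points in the strip instance if and only if the corresponding subfamily does so in the line instance. Thus the equivalence established in the proof of the theorem carries over unchanged. (Alternatively, if one prefers not to allow degenerate strips, one can replace line $y = i$ by the strip $(-\infty,\infty) \times [i - \tfrac14, i + \tfrac14]$; since the points all have integer coordinates, no point lies in a strip it did not lie in on the corresponding line, so the reduction still goes through.)

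Next, for membership in NP: given a candidate subfamily $\mathcal{F}' \subseteq \mathcal{F}$, one checks in polynomial time that every blue point lies in some strip of $\mathcal{F}'$ and that the number of red points covered by $\mathcal{F}'$ is at most $k_r$ — this is the same observation used for \AL. Combining this with the NP-hardness inherited from \AL~via the restriction above gives NP-completeness of \AST.

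I do not expect any real obstacle here, since this corollary is a pure containment/restriction argument; the only point requiring a word of care is making the reduction respect whatever non-degeneracy convention one adopts for "strip" (one could insist strips be nondegenerate, $a < b$), and the integer-coordinate fattening trick described above handles that cleanly.
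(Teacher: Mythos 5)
Your proposal is correct and matches the paper's (implicit) argument: the paper justifies this corollary exactly by noting that axis-parallel lines are a special case of bi-directional strips, so NP-hardness transfers from \AL, and NP membership is checked the same way. Your extra remark on handling a non-degeneracy convention via slight fattening is a harmless refinement of the same idea.
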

%\begin{proof}
%	Let $H$ be the set of horizontal strips and $V$ be the set of vertical strips. Let $h_i = \left(  \infty,y_i  \right] \times \left( \infty,y_j \right] \in H $ be a horizontal line in $H$. Similarly $v_i =  \left[ x_i,\infty \right) \times \left[x_j,\infty\right) \in V $ be a vertical line in $V$.
%	For every $h_i$, we consider the line $l_i = \left( \infty,y_i \right] $ and all the points (red and blue) contained in $h_i$ move to the line $l_i = \left( \infty,y_i \right] $ without changing its $x$-coordinate.
%	Similarly for every $v_i$, we consider the line $l_j = \left[ x_i, \infty \right) $ and all the points (red and blue) contained in $v_i$ move to the line $l_j = \left[ x_i, \infty \right) $ without changing its $y$-coordinate.
%	It easy to see the equivalence of these problems. Every point contained in a $h\backslash v$ strip which is covered by the corresponding $h / v$ line (line passes through the same set of points). 
%\end{proof}
%

\section{RBSC on Axis-parallel hyperplanes}
\label{hyper}
\noindent From Corollary \ref{Cor1}, we know that \AH is NP-complete. In this section, we show fixed parameter tractability and the existence of polynomial kernels for the parameterized \AH problem, under different parameterizations.
%prove that the problem is FPT and there exists polynomial kernel  in ${\rm I\!R}^d$ for an array of parameters.
\subsection{Parameterizing by $k_r$}
\label{red}
%From the corollary \ref{Cor1} we were shown that \AH is NP-hard. In this section, we prove that either FPT and polynomial kernel exists or Para- NP-complete in ${\rm I\!R}^d$ is.
In this section, we consider the \AH problem parameterized by $k_r$, where $k_r$ is the number of red points in the solution. Let $(R \cup B,\calH,k_r)$ be an instance of the \AH problem parameterized by the number of red points $k_r$, where $R \cup B$ and $\calH$ are set of points and set of hyperplanes in the instance respectively.

%\noindent We apply the reduction rules \ref{RR1},\ref{RR2} and \ref{RR3} exhaustively on \AH\\
We show the \AH problem is FPT parameterized by $k_r$ using the bounded search tree technique.
%After applying these reduction rules on $(L,k)$, every line contains at least one red point and at least one blue point. Every blue point is covered by a vertical and a horizontal line. We recursively branch on these vertical and horizontal lines by considering the following, There exist some lines that have one red point and some lines have more than one red point.
We apply reduction rules \ref{Reduction rule 1}, \ref{Reduction rule 2}, \ref{Reduction rule 3} and \ref{Reduction rule 4} on $(R \cup B, \mathcal{H},k_r)$   exhaustively. Now every hyperplane contains at least one red point and at least one blue point. Every blue point is covered by at most $d$ hyperplanes.  Thus, for a blue point, we check with the set of hyperplanes that passes through it in a bruteforce manner and  we can recursively solve the remaining problem. Since every hyper plane contains at least one red point, the associated budget $k_r$ is reduced by at least one in the subproblem.
% on the hyperplane that covers the blue point.
 %Every time we create a subtree, where nodes corresponds to the subgraphs of the instance of the problem together, % the node's depth in the search tree.
% On every branch the budget $k_r$ drops by at least one,
The bounded search tree algorithm returns YES, when $B=\phi$ else if $k_r=0$ and $B \neq \phi$  it returns NO.

Thus giving us an algorithm that runs in $O^*(d^{k_r})$ time. The next result improves this running time.

\begin{thm}
	\label{FPT_red} 
	The problem  \AH can be solved in time $O({c_d}^{k_r}.n^{O(1)})$ where $c_d = \frac{d-1+\sqrt{(d-1)^2+4}}{2} $.
\end{thm}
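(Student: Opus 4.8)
The plan is to reduce \AH parameterized by $k_r$ to \dHS and then invoke Proposition \ref{DH}. First I would apply Reduction rules \ref{Reduction rule 1}--\ref{Reduction rule 4} exhaustively, so that every hyperplane contains at least one red and at least one blue point, and every blue point lies on at most $d$ hyperplanes. The key observation is that in a \emph{yes}-instance every blue point must be covered, and since a blue point $p$ lies on at most $d$ hyperplanes, the solution family $\mathcal{H}'$ must contain at least one of these $\le d$ hyperplanes; in particular, the solution must ``hit'' the set of hyperplanes through $p$. This is where the hitting set structure comes from, but note the hitting set must be over \emph{hyperplanes}, not points.

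The construction I would use: build an instance of \dHS whose ground set $S$ is the set of hyperplanes in $\mathcal{H}$ (after reduction), and for each blue point $p$ create a subset $T_p \subseteq S$ consisting of the (at most $d$) hyperplanes passing through $p$. Pad each $T_p$ with dummy elements if necessary so that $|T_p| = d$ exactly, as required by the \dHS definition — the dummies being fresh hyperplanes that appear in no other set and contain no red point, so selecting them is never beneficial. Then I claim: $(R\cup B,\mathcal{H},k_r)$ is a \emph{yes}-instance of \AH if and only if the \dHS instance $(\{T_p\},S,k_r)$ has a solution of size $\le k_r$. Actually the subtlety is that \dHS minimizes the \emph{number of hyperplanes}, while \AH minimizes the \emph{number of red points covered}; these are not the same objective. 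So the correct parameter for \dHS should not be $k_r$ but something derived from it.

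The hard part will be reconciling the two objectives. The fix is to observe that after the reduction rules, every hyperplane contains at least one red point, so any solution family $\mathcal{H}'$ covering $B$ uses at least one hyperplane per ``private'' red point, and conversely a family of $t$ hyperplanes covers at most — hmm, this bound goes the wrong way. The cleaner route, and the one I would pursue, is: since each selected hyperplane contributes at least one red point to the covered set, a solution covering $\le k_r$ red points uses $\le k_r$ hyperplanes (those that contain blue points; by Reduction rule \ref{Reduction rule 3} all remaining hyperplanes contain a blue point, and by Reduction rule \ref{Reduction rule 2} each contains a red point, and distinct hyperplanes through a common structure can share red points only if... ). The precise argument: a minimal solution family has the property that the red points it covers, counted with the hyperplanes, gives $|\mathcal{H}'| \le |R \cap \bigcup \mathcal{H}'| \le k_r$ once we argue each hyperplane in a minimal solution contains a red point not forced to be shared — this follows because if two hyperplanes in the solution covered exactly the same red points we could drop one unless it is needed for a blue point, and Reduction rule \ref{Reduction rule 1} handles the forced case. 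So I would branch/guess the set of red points covered (at most $k_r$ of them), but that is the $O^*(d^{k_r})$ algorithm already described; to get the improved bound I instead run the \dHS algorithm on the hyperplane-hitting-set instance with parameter $k_r$, using the fact that a solution to \AH of cost $\le k_r$ yields a hitting set of hyperplanes of size $\le k_r$ (each hyperplane billed to a distinct red point it covers) and, conversely, any hitting set of size $\le k_r$ selects $\le k_r$ hyperplanes covering $\le k_r \cdot (\text{something})$ red points — here I need the reduction rules to also bound red points per hyperplane, which Reduction rule \ref{Reduction rule 4} does \emph{not} give tightly.

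Rather than fight this, the approach I would actually commit to is: guess, for the hitting set, that we are hitting the blue-point sets $T_p$ over the ground set of \emph{red points} via the incidence ``a hyperplane through $p$ also passes through some red point,'' but that loses the $d$-bound. I think the intended and correct solution is the first one: ground set $=$ hyperplanes, sets $T_p$ of size $\le d$ padded to size $d$, parameter $=k_r$, and the equivalence holds because (i) a valid \AH solution of red-cost $\le k_r$, after discarding redundant hyperplanes, has $\le k_r$ hyperplanes (charge each remaining hyperplane to a red point on it that is not covered by the others — possible by minimality and Reduction rule \ref{Reduction rule 1}), giving a size-$\le k_r$ hitting set; and (ii) a size-$\le k_r$ hitting set of the $T_p$ is a family of $\le k_r$ real hyperplanes covering all blue points, and since it has $\le k_r$ hyperplanes each with... no. I will therefore structure the proof so that \dHS is parameterized by the \emph{number of hyperplanes in the solution}, show that an optimal \AH solution can be assumed to use at most as many hyperplanes as red points it covers (the charging argument above), and hence solving \dHS for every budget $j = 1, \dots, k_r$ and checking the red-cost of the returned hyperplane family gives the answer; since $c_d^{j} \le c_d^{k_r}$, the total running time is $O(k_r \cdot c_d^{k_r} + n) \cdot n^{O(1)} = O(c_d^{k_r} \cdot n^{O(1)})$. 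The main obstacle, and the step I expect to need the most care, is exactly this charging argument: proving that a minimum-red-cost solution family can be taken to have at most $k_r$ hyperplanes, which requires combining minimality with Reduction rule \ref{Reduction rule 1} to rule out the bad case where many hyperplanes are needed to cover blue points while collectively touching few red points.
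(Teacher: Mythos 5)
Your final plan does not work, and the obstacle is exactly the one you flagged and never resolved: the objective mismatch. Two concrete failures. First, your charging argument is false. Minimality of a solution family only guarantees that each hyperplane has a \emph{private blue} point (otherwise it could be dropped); it does not give each hyperplane a private red point. Hyperplanes of different orientations can all share a single red point, so a solution of red-cost $k_r$ may genuinely need up to $d\cdot k_r$ hyperplanes (at most $k_r$ per orientation class, since only parallel hyperplanes cannot share red points). Hence a yes-instance of \AH need not yield a hitting set of size $\le k_r$ over the ground set of hyperplanes. Second, even where a small hitting set exists, the \dHS algorithm of Proposition~\ref{DH} is a decision/witness procedure: it returns \emph{some} hitting set of size $\le j$, not the one minimizing red cost. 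Checking the red-cost of the particular family it returns can output NO on a yes-instance, so ``run \dHS for every budget $j\le k_r$ and inspect the returned family'' is unsound. Iterating budgets up to $d\cdot k_r$ would both remain unsound for the same reason and blow the running time past $c_d^{k_r}$.

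The paper's proof sidesteps both problems with a different structure. As long as some blue point lies on a hyperplane containing at least two red points, it branches over the at most $d$ hyperplanes through that blue point; in that branch the budget drops by at least $2$ and in the others by at least $1$, giving the branching vector whose root is $c_d$. Only in the residual case where \emph{every} hyperplane contains exactly one red point does it reduce to \dHS, and there the ground set is the set of \emph{red points} (not hyperplanes): the set for a blue point $b$ consists of the red points on the at most $d$ hyperplanes through $b$. In that special case the two objectives coincide --- picking a red point corresponds to picking, for each blue point it serves, a hyperplane whose unique red point it is, so red cost equals hitting-set size --- which is precisely the equivalence your hyperplane-based ground set cannot deliver. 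If you want to salvage your write-up, you must add this case distinction (or an equivalent mechanism) rather than reduce the whole problem to \dHS in one shot.
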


\begin{proof}
	 
	 %There exist a blue point that passes through two hyperplanes, one has one red point and the other hyperplane have more than one red point, we recursively branch on those hyperplanes. 

Our  bounded search tree algorithm considers a blue point $b$ such that at least one hyperplane that contains $b$ has two red points and recursively solve the remaining problem on the hyperplane that covers $b$ in the solution. Thus, $k_r$ drops by at least $2$ in at least one branch and drops by at least one in all other branches. This gives us a branching  algorithm that runs in $O({c_{d}}^{k_r}.n^{O(1)})$ time where $c_d = \frac{d-1+\sqrt{(d-1)^2+4}}{2}$.
If the instance does not contain such a blue point (i.e, every hyperplane contains exactly one red point), then the instance can be reduced to an instance of the \dHS problem parameterized by solution size. Let $(R\cup B, \calH, k_r)$ be an instance of \AH such that every $H \in \calH$ contains exactly one red point from $R$. We construct an instance $(U, \mathcal{F},k)$ of \dHS with $U = R,  k = k_r$ and $\mathcal{F} $ contains sets corresponding to every blue point, where a set corresponding to a blue point $b$ contains all the red points contained in  the at most $d$ hyperplanes that contain $b$. It is easy to see that $(R\cup B, \calH, k_r)$ is a YES instance if and only if $(U, \mathcal{F},k)$ is a YES instance. By Proposition \ref{DH},  $(U, \mathcal{F},k)$ can be solved using a branching algorithm that runs in $O({c_{d}}^{k}.n^{O(1)})$ time.% \cite{3hittingset}. %Table~\ref{TT} gives us the values of $c_d$ for different values of $d$.  
	
%	In \AH, the recurrence will be $T(K)=(d-1) T(k-1)+T(k-2)$. By using the recurrence $T(K)=(d-1)T(k-1)+T(k-2)$ i.e, $O(\frac{(d-1)+(\sqrt((d-1)^2)-4)}{2})^k$. We can reduce \AH instance on every hyperplane containing exactly one red point in ${\rm I\!R}^d$  to $d-$ hitting set instance.
%	
%	Consider every red point as an element in the universe in the $d-$ hitting set problem. Sets are corresponding to every blue point. Elements in the sets are red points on the hyperplanes which is passes through the corresponding blue points. 
%	
%	The best known running time of $d-$ hitting set is $O(ck+n)$ time algorithm with $c=d-1+O(d^{-1})$ \cite{3hittingset}. The value of $O(ck+n)$ dominates the value of $T(K)=(d-1)T(k-1)+T(k-2)$ for $d \geq 3$.
%	
\end{proof}

%
%\begin{table}[h!]
%\centering
%	\begin{tabular}{|c|c|c|c|c|c|c|c|c| }
%		\hline
%		$d$ & 3 & 4 & 5&6&7&8&9&10 \\ 
%		\hline
%		$c_d$ & $2.14$ &$3.30$&$4.23$&$5.19$&$6.16$&$7.14$&$8.12$&$9.11$\\  
%		\hline
%\end{tabular}
%\vspace{0.2cm}
%\caption{Running time of \dHS for different values of $d$.}
%\label{TT}
%\end{table}
\noindent \textbf{Kernelization}
\vspace{.1cm}
\noindent We apply the following set of reduction rules for all integers $\delta = 1$ to $ d-1$ exhaustively. For $2\leq \delta \leq d-1$, we apply the reduction rule corresponding to $\delta$ only when the reduction rule corresponding to  $(\delta-1)$ is not applicable. For a point $p \in \mathbb{R}^d$, $x_i(p)$ denotes the value of the $i^{th}$ coordinate of $p$, for $1 \leq i \leq d$.

%\begin{corollary}
%	The problem  \AH in ${\rm I\!R}^2$ can be solved in time $O(1.61)^k$ where $k$ is the number of red points in the solution. 
%\end{corollary}

\begin{reductionrule}[$\delta $]
\label{SameCoordPoints}
%\noindent \textbf{\textit{Reduction rule 10.} } 
%If a hyperplane $H$  contain more than $(k_r +1)$ blue points, then delete all but $(k_r +1)$.
Repeat for all $D' \subset [d]$ such that $|D'| = d-\delta$. 
\\Let $B' \subseteq B$ be such that for all $p,q \in B'$, $x_i(p) = x_i(q)$ for all $i \in D'$. If $|B'| > \delta!(k_r)^{\delta} +1$ then delete all but $\delta! k_r^{\delta}+1$ points of $B'$.
\end{reductionrule}
\begin{lemma}
Reduction Rule \ref{SameCoordPoints} ($\delta $) is safe.
\end{lemma}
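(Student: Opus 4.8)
The plan is to show that deleting "excess" blue points that agree on $d-\delta$ coordinates does not change the answer. Let $D' \subseteq [d]$ with $|D'| = d-\delta$, and let $B' \subseteq B$ be a set of blue points that all share the same values $x_i(\cdot)$ on the coordinates $i \in D'$. First I would observe the structural fact that makes the rule work: any axis-parallel hyperplane $H \in \calH$ fixes exactly one coordinate, say coordinate $j$. If $j \in D'$, then $H$ either contains all of $B'$ (when $H$'s fixed value equals the common value $x_j$ on $B'$) or none of it. If $j \notin D'$ (there are exactly $\delta$ such coordinates), then $H \cap B'$ is a set of points of $B'$ that additionally agree on coordinate $j$ — i.e., $H \cap B'$ lies in a subspace where $d-\delta+1$ coordinates are now fixed. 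Thus a hyperplane that is "useful" for covering points of $B'$ without swallowing all of $B'$ must fix one of the $\delta$ coordinates outside $D'$.

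The key counting step, which I expect to carry by induction on $\delta$ using the hypothesis that the rule for $\delta-1$ is not applicable: suppose $(R\cup B, \calH, k_r)$ has a solution $\mathcal{F}'$ covering at most $k_r$ red points. Each $H \in \mathcal{F}'$ contains at least one red point after the earlier reduction rules (Reduction rules \ref{Reduction rule 1}--\ref{Reduction rule 4}), so $|\mathcal{F}'| \le k_r$. Each such $H$ either contains all of $B'$ or, as above, restricts $B' \cap H$ to a sub-family where one of the $\delta$ outside coordinates is additionally pinned. If some $H \in \mathcal{F}'$ contains all of $B'$, then deleting points of $B'$ is trivially safe. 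Otherwise, each of the $\le k_r$ hyperplanes in $\mathcal{F}'$, restricted to $B'$, lives in one of the "one-more-coordinate-fixed" slices; each such slice, by the fact that Reduction Rule \ref{SameCoordPoints}($\delta-1$) is no longer applicable (for $\delta \ge 2$; for $\delta = 1$ the slice is a single axis-parallel hyperplane's worth of points with $d-1$ coordinates fixed, and such a set covered entirely by one hyperplane contributes at most... ) contains at most $(\delta-1)!\,k_r^{\delta-1}+1$ blue points of $B$, hence of $B'$. Therefore the points of $B'$ actually covered by $\mathcal{F}'$ number at most $k_r \cdot \big((\delta-1)!\,k_r^{\delta-1}+1\big) \le \delta!\,k_r^{\delta} $ — wait, we need the bound $\delta!\,k_r^{\delta}$, and indeed $k_r\cdot(\delta-1)!\,k_r^{\delta-1} = \delta!\,k_r^{\delta}/\delta \cdot \delta$... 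I would tune the constants so that the total is at most $\delta!\,k_r^{\delta}$, which leaves at least one of the $\delta!\,k_r^{\delta}+1$ retained points of $B'$ uncovered — contradiction, unless every retained point is among those forced to be covered. Consequently, the retained points of $B'$ are covered by $\mathcal{F}'$ whenever the original $B'$ was, so the reduced instance is a YES instance iff the original is.

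For the converse direction (reduced YES $\Rightarrow$ original YES) I would argue: a solution $\mathcal{F}'$ for the reduced instance covers every retained point of $B'$; since $|\mathcal{F}'| \le k_r$ and by the same slicing argument the covered retained points fill all the available slices, every deleted point of $B'$ lies in a slice already met by $\mathcal{F}'$, hence is covered too. (Equivalently: if a deleted point $q \in B'$ were uncovered, one could have kept $q$ instead of one of the retained points, contradicting maximality of the retained set in the rule.) This needs the pigeonhole bound to be tight enough that the retained set of size $\delta!\,k_r^{\delta}+1$ genuinely "saturates" the slices.

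The main obstacle is getting the induction on $\delta$ to interlock cleanly with the "only apply rule $\delta$ when rule $\delta-1$ is inapplicable" proviso, and in particular making the constant $\delta!\,k_r^\delta$ come out exactly right: we need each of the $\le k_r$ hyperplanes of any minimal solution to touch at most $(\delta-1)!\,k_r^{\delta-1}+1$ points of the relevant $(d-\delta+1)$-fixed-coordinate slice, and $k_r\big((\delta-1)!\,k_r^{\delta-1}+1\big) < \delta!\,k_r^\delta + 1$, which holds for $k_r \ge 1$, $\delta \ge 1$. The base case $\delta = 1$ is direct: a set $B'$ with $d-1$ coordinates fixed is covered only by hyperplanes fixing the one free coordinate (each a single point's worth) or by the one hyperplane fixing... actually only by hyperplanes fixing one of the $d-1$ shared coordinates (which grab all of $B'$) or the hyperplane fixing the free coordinate (which grabs one point each), so $k_r$ hyperplanes cover at most $k_r$ of them unless one covers all — giving the threshold $k_r + 1 = 1!\,k_r^1 + 1$. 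I would write the base case in full and then run the inductive step as sketched above.
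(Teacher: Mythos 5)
Your overall skeleton (bound how many points of $B'$ can be covered when no single hyperplane contains all of $B'$, using the inapplicability of Reduction Rule \ref{SameCoordPoints}($\delta-1$) to bound each slice) is the same as the paper's, but two steps are genuinely wrong as written. First, the bound $|\mathcal{F}'|\le k_r$ is false: distinct axis-parallel hyperplanes need not be disjoint, and up to $d$ of them can pass through the same red point, so a solution touching only $k_r$ red points may contain up to $d\cdot k_r$ hyperplanes. The paper counts differently: every hyperplane that meets $B'$ without containing all of it must fix one of the $\delta$ coordinates outside $D'$; hyperplanes fixing the same coordinate are parallel, hence pairwise disjoint, hence cannot share red points, so there are at most $k_r$ of them per direction and at most $\delta k_r$ relevant hyperplanes in total. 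Combined with the per-hyperplane bound of about $(\delta-1)!\,k_r^{\delta-1}$ points of $B'$ coming from the inapplicability of rule $\delta-1$, this product $\delta k_r\cdot(\delta-1)!\,k_r^{\delta-1}$ is exactly where the threshold $\delta!\,k_r^{\delta}$ comes from; your acknowledged need to ``tune the constants'' is resolved only after replacing the false global bound $k_r$ by this direction-wise bound $\delta k_r$.

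Second, the direction that actually carries the safety --- that a solution of the reduced instance also covers the \emph{deleted} points of $B'$ --- is not established. Both the ``saturation of slices'' claim and the ``keep $q$ instead of a retained point, contradicting maximality'' parenthetical fail: the rule retains an arbitrary set of $\delta!\,k_r^{\delta}+1$ points, so a deleted point may lie in a slice containing no retained point, and there is no maximality to appeal to. The correct argument is to run the same counting on the retained set: in any solution of the reduced instance some single hyperplane $H$ must contain all $\delta!\,k_r^{\delta}+1$ retained points; $H$ cannot fix a coordinate $j\notin D'$, since then these points would agree on $d-\delta+1$ coordinates and Reduction Rule \ref{SameCoordPoints}($\delta-1$), applied exhaustively before rule $\delta$, would have left at most $(\delta-1)!\,k_r^{\delta-1}+1$ of them; hence $H$ fixes a coordinate of $D'$ at the common value of $B'$ and therefore contains every point of $B'$, the deleted ones included. (The other direction is immediate, since the reduced blue set is a subset of the original.) Your base case $\delta=1$ is fine, but the general step needs both repairs above.
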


\begin{proof}
The proof follows from the fact that in any set of hyperplanes $\mathcal{H'}$ that covers at most $k_r$ red points, all the points of $B'$ are covered by the same hyperplane. Assume, for contradiction, that no hyperplane in $\mathcal{H'}$ contains all points in $B'$. Thus $B'$ is covered by a set of hyperplanes, say $\mathcal{H}_1 \subseteq \mathcal{H'} $ such that no  hyperplane in $\mathcal{H}_1$ is parallel to the $i^{th}$ co-ordinate axis, for all $i \in D'$. Without loss of generality, assume that $D' = \{1,2,\dots,d-\delta\}$. We know that every hyperplane contains at least one red point (by Reduction Rule~\ref{Reduction rule 2}) and parallel hyperplanes cannot share a red point. Therefore, $\mathcal{H}_1$ can have at most $k_r$ hyperplanes that are parallel to the $j^{th}$ co-ordinate axis, for all $d-\delta+1 \leq j \leq d$. Thus $\mathcal{H}_1$ contains at most $k_r.\delta$ hyperplanes. 

Let $H$ be a hyperplane in $\mathcal{H}_1$ that is parallel to the $j^{th}$ co-ordinate axis for $j \notin D'$. Then for any two points $p,q \in H \cap B'$, $x_j(p)=x_j(q)$ and $x_i(p) =x_i(q)$ for $1\leq i \leq d-\delta$.  Therefore, hyperplane in $\mathcal{H}_1$ can contain at most $(\delta-1)!(k_r)^{\delta -1}$ points from $B'$.  Otherwise, Reduction Rule \ref{SameCoordPoints} $(\delta-1)$ can be applied, which is a contradiction. 

Thus $\mathcal{H}_1$ covers at most $\delta!(k_r)^{\delta }$ points. This contradicts our assumption that $\mathcal{H}_1$ covers all points of $B'$. 
\end{proof}

\begin{thm}\label{k_kernal}
\AH admits a kernelization algorithm that returns $(R \cup B, \calH)$ with $|B| \leq d! k_r^d$, $|R| \leq d.d!.k_r^{d+1}$, $|\calH| \leq d.d! k_r^d$.
\end{thm}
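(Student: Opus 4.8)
The plan is to obtain the kernel by exhaustively applying Reduction Rules \ref{Reduction rule 1}--\ref{Reduction rule 4} and Reduction Rule \ref{SameCoordPoints}($\delta$) for $\delta = 1,\dots,d-1$ (the latter in increasing order of $\delta$, as prescribed), supplemented by the trivial rule that deletes any red point lying on no hyperplane of $\calH$ and by returning a fixed constant-size NO-instance whenever the budget ever drops below $0$ (we may assume $k_r\ge 1$, the case $k_r=0$ being decidable directly). Every rule is already known to be safe (Lemma \ref{RRProof} and the safety lemma for Reduction Rule \ref{SameCoordPoints}), and deleting a red point on no hyperplane is safe since such a point can never be covered and is irrelevant to any solution. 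Each application strictly decreases $|R|+|B|+|\calH|$ and each rule is testable and executable in time polynomial in $n$ (with $d$ fixed), so the whole reduction runs in polynomial time; I would repeat the entire cycle until no rule applies, which copes with the fact that, e.g., deleting blue points via Reduction Rule \ref{SameCoordPoints} may create an all-red hyperplane that Reduction Rule \ref{Reduction rule 2} must then remove.

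Next I would record the structural state of the reduced instance. By Reduction Rules \ref{Reduction rule 2} and \ref{Reduction rule 3} every surviving hyperplane contains at least one red and at least one blue point; by Reduction Rule \ref{Reduction rule 4} every surviving hyperplane contains at most $k_r$ red points; since an axis-parallel hyperplane through a point $p$ must have the form $x_j = x_j(p)$ for some $j \in [d]$, every blue point lies on at most $d$ hyperplanes; and by Reduction Rule \ref{SameCoordPoints}($d-1$), applied over all singletons $D' = \{j\}$, any set of blue points sharing one coordinate has size at most $(d-1)!\,k_r^{\,d-1}$ (dropping additive $1$'s, as in the preceding proofs).

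The core is the bound $|B| \le d!\,k_r^{\,d}$, which I would prove in contrapositive form: if it fails then the reduced instance is a NO-instance, and the algorithm may legitimately output a trivial NO-instance. So suppose $\calH' \subseteq \calH$ is a solution covering at most $k_r$ red points. For each fixed direction $j$, the hyperplanes of $\calH'$ of the form $x_j = c$ are pairwise disjoint, hence pairwise red-point-disjoint, and each carries a red point, so there are at most $k_r$ of them; summing over the $d$ directions gives $|\calH'| \le d k_r$. Each hyperplane $x_j = c$ of $\calH'$ covers a set of blue points all sharing coordinate $j$, hence at most $(d-1)!\,k_r^{\,d-1}$ of them by the structural observation above. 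Therefore $\calH'$, and thus $B$, contains at most $d k_r\cdot (d-1)!\,k_r^{\,d-1} = d!\,k_r^{\,d}$ blue points, as claimed.

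Once $|B| \le d!\,k_r^{\,d}$ is established, the remaining two bounds are immediate: since each of the at most $d!\,k_r^{\,d}$ blue points lies on at most $d$ hyperplanes and (by Reduction Rule \ref{Reduction rule 2}) every surviving hyperplane contains a blue point, $|\calH| \le d\,|B| \le d\cdot d!\,k_r^{\,d}$; and since every surviving hyperplane contains at most $k_r$ red points (Reduction Rule \ref{Reduction rule 4}) while every surviving red point lies on some hyperplane (by the clean-up rule), $|R| \le k_r\,|\calH| \le d\cdot d!\,k_r^{\,d+1}$. The step I expect to be the main obstacle is the $|B|$ bound: it is the only place where one reasons about a hypothetical solution rather than local structure, and it hinges on coupling the ``few covering hyperplanes'' estimate $|\calH'| \le d k_r$ with the per-hyperplane blue-point bound supplied by the $\delta=d-1$ case of Reduction Rule \ref{SameCoordPoints} — which in turn relies on the rules having been applied in increasing order of $\delta$, exactly the hypothesis that makes each individual rule safe.
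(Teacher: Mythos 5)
Your proposal is correct and follows essentially the same route as the paper: bound the blue points by combining the ``at most $dk_r$ hyperplanes in any solution'' observation with the per-hyperplane cap of $(d-1)!\,k_r^{d-1}$ blue points enforced by Reduction Rule \ref{SameCoordPoints}($d-1$), then derive $|\calH| \leq d|B|$ and $|R| \leq k_r|\calH|$ from Reduction Rules \ref{Reduction rule 2}--\ref{Reduction rule 4}. The only differences are that you make explicit some details the paper leaves implicit (discarding red points on no hyperplane, outputting a trivial NO-instance when the blue bound is violated, and the running-time/termination argument), which is fine.
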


\begin{proof}	
	Let $(R \cup B, \mathcal{H},k_r)$ be an instance of \AH such that none of the reduction rules are applicable. Then, if it is a YES instance, all the points in $B$ can be covered by at most $k_r.d$ hyperplanes. Any hyperplane can contain at most $(d-1)!k_r^{(d-1)}$ blue points, otherwise Reduction Rule \ref{SameCoordPoints} $(d-1)$ can be applied. Thus a YES instance can have at most $d! k_r^d$ blue points. 
	
	Every hyperplane in $\calH$ contains at least one blue point and any blue point is contained in at most $d$ hyperplanes. Therefore, $|\calH| \leq d.d! k_r^d$. By Reduction Rule~\ref{Reduction rule 4}, a hyperplane contains at most $k_r$ red points. Thus $|R| \leq d.d!.k_r^{d+1}$.
	
	% and none of the reduction rules are applicable to the  Instance, then $|\mathcal{H}'|=d \cdot d!(k_r +1)^{d}$, $|B'|=d!(k_r +1)^{d}$, and $|R' |=(k_r)\cdot d \cdot d!(k_r +1)^{d}$ in ${\rm I\!R}^d$.

	%	From the observation \ref{krCountInd},number of blue points in ${\rm I\!R}^d$ is at most ${k_r}^d$.
%	Since, we cannot apply reduction rules anymore on If $(R \cup B, \mathcal{H},k_r)$. Then the number of hyperplanes $\mathcal{H'}$ is $d \cdot b$. Also, from the reduction rule \ref{krCountInd} blue points are bounded by at most $d!(k_r +1)^{d}$. Each blue point is covered by at most $d$ lines,therefore total number of lines needed to cover the blue points are at most $d \cdot d!(k_r +1)^{d}$. Each of these lines contains at most $k_r$ red points, hence the number of red points is at most $(k_r)\cdot d \cdot d!(k_r +1)^{d}$ in ${\rm I\!R}^d$.
\end{proof}

\subsection{Parameterizing by $b$}
\label{bl}
%There is a interesting result on \GH is known to be W[1] hard using the parameter $b$ \cite{RBSC}. Axis parallel version of this problem can not be W[1] hard, \AH is FPT and it has polynomial kernal on parameter $b$.

\vspace{0.1cm}
Since every blue point is contained in at most $d$ hyperplanes, the next result is easy to see by the idea of bounded search trees.
\begin{lemma}
	\label{FPT_b} 
	The \AH problem parameterized by $b$ can be solved in time  $O^*(d^b.n^{O(1)})$.
	\end{lemma}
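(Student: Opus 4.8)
The plan is to describe a bounded search tree of depth at most $b$ and branching factor at most $d$. First I would apply Reduction Rules \ref{Reduction rule 1}, \ref{Reduction rule 2} and \ref{Reduction rule 3} exhaustively so that every hyperplane in $\calH$ contains at least one blue point; in particular every remaining blue point lies on at least one hyperplane. Since a point in $\mathbb{R}^d$ can lie on at most one axis-parallel hyperplane orthogonal to each of the $d$ coordinate directions, each blue point is contained in at most $d$ hyperplanes of $\calH$.

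The algorithm picks an arbitrary blue point $b \in B$ (if $B = \emptyset$ the instance is a YES-instance provided the red budget has not been exceeded, so output YES). Let $H_1, \dots, H_t$ with $t \le d$ be the hyperplanes of $\calH$ containing $b$. In any feasible solution $\calH'$, at least one $H_i$ must belong to $\calH'$ to cover $b$, so we branch on the choice of $i$. In branch $i$ we add $H_i$ to the partial solution, remove from $B$ all blue points covered by $H_i$ (which includes $b$), subtract $|H_i \cap R|$ from $k_r$ (reject this branch if the remaining budget is negative), and recurse. Correctness follows because any solution must contain some $H_i$, and each branch faithfully explores that commitment; the recursion bottoms out with a YES the first time $B$ becomes empty within budget and with a NO if every branch fails.

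Each recursive call strictly decreases $|B|$ by at least one (the chosen hyperplane covers $b$), so the search tree has depth at most $b$, and since the branching factor is at most $d$ the tree has at most $d^b$ leaves. Each node does polynomial work (identifying the hyperplanes through $b$, updating the sets, applying the reduction rules), giving the claimed $O^*(d^b \cdot n^{O(1)})$ running time. The only mild subtlety — and the one point I would be careful about — is the bookkeeping of the red budget across branches and the base cases: one must ensure that the red points covered by previously chosen hyperplanes are not double-counted, which is handled by deleting covered points (or equivalently tracking the union of chosen hyperplanes) rather than naively summing $|H_i \cap R|$ over the branch. With that care the argument is routine.
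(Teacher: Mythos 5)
Your proof is correct and is exactly the bounded-search-tree argument the paper has in mind (the paper only sketches it, noting that each blue point lies on at most $d$ axis-parallel hyperplanes and invoking bounded search trees): branch on the at most $d$ hyperplanes through an uncovered blue point, so the tree has depth at most $b$ and at most $d^b$ leaves, with polynomial work per node. Your handling of the red budget by deleting already-covered red points is the right bookkeeping; the only minor slip is that Reduction Rules \ref{Reduction rule 1}--\ref{Reduction rule 3} do not by themselves guarantee every blue point lies on some hyperplane, but your branching already returns NO when a blue point has no covering hyperplane, so nothing breaks.
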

%\begin{proof}
%	Every blue point is covered by one of the $d$ hyperplanes passes through it. From this observation, we apply the idea of the bounded search tree. It is easy to see the $d^b$ algorithm, where $b$ is the number of blue points in the solution. 
		
%\end{proof}
%		
%\begin{corollary}
%	The problem  RBSC on axis-parallel lines  in ${\rm I\!R}^2$ can be solved in time $O(2)^b$ where $b$ is the number of blue points in the solution. 
%\end{corollary}	
%	

Now we consider the existence of polynomial kernels for this problem.

\begin{thm}

	\label{b_kernal}
	The \AH problem parameterized by $b$ admits a polynomial kernel.
	
\end{thm}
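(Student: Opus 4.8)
The plan is to upgrade the bounded-search-tree argument behind Lemma~\ref{FPT_b} into a genuine kernel while never leaving the class of (unweighted) \AH instances. First I would apply Reduction Rules~\ref{Reduction rule 1}--\ref{Reduction rule 4} exhaustively; by Lemma~\ref{RRProof} this is safe, and afterwards every surviving hyperplane contains at least one blue and at least one red point and carries at most $k_r$ red points. Since the objects are axis-parallel, every blue point lies on at most $d$ hyperplanes, so the hyperplanes meeting $B$ number at most $db$; as Reduction Rule~\ref{Reduction rule 2} has removed every hyperplane disjoint from $B$, this already gives $|\calH|\le db$ while $|B|=b$.

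It remains to bound $|R|$. I would classify each red point $r$ by its \emph{type}, the set of surviving hyperplanes containing $r$. Because $r$ lies on at most $d$ axis-parallel hyperplanes, each type is a size-$\le d$ subset of the $\le db$ hyperplanes, so there are at most $\sum_{i=0}^{d}\binom{db}{i}=O((db)^d)$ types, which is polynomial in $b$ for fixed $d$. The point of this classification is that two red points of the same type are covered by exactly the same subfamilies of $\calH$; consequently the red cost of any $\calH'\subseteq\calH$ equals the sum, over the types that $\calH'$ meets, of the number of red points of that type, a sum of disjoint contributions.

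Next I would introduce one further reduction rule: for every type carrying more than $k_r+1$ red points, delete all but $k_r+1$ of them. Its safety mirrors that of Reduction Rule~\ref{Reduction rule 4}: any subfamily meeting such a type already covers more than $k_r$ red points and hence cannot be a solution, while every cheaper subfamily keeps its cost unchanged, so the yes/no answer is preserved and the reduced instance is still an \AH instance. After this rule $|R|=O((db)^d\,(k_r+1))$, together with $|B|=b$ and $|\calH|\le db$.

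The last step, and the one I expect to be the main obstacle, is to remove the residual dependence on $k_r$ and express the bound purely in $b$. The natural route is to show that we may assume $k_r=b^{O(1)}$: a minimal cover uses at most $b$ hyperplanes, and each hyperplane lies in only $O((db)^{d-1})$ types, so an optimal solution meets at most $O\!\left(b(db)^{d-1}\right)$ types; if $k_r$ is at least the total (capped) multiplicity of these types, the instance is trivially a yes-instance. Converting this into an a priori polynomial bound on $k_r$ is delicate precisely because the optimum red cost is a sum of polynomially many multiplicities each bounded only by $k_r$, so I anticipate needing an additional normalisation rule that lowers $k_r$ in step with the surviving multiplicities while provably preserving the comparison against the budget. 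Once $k_r=b^{O(1)}$ is secured, the capping rule above yields $|R|=b^{O(1)}$, and the three bounds on $|B|$, $|\calH|$ and $|R|$ together constitute the desired polynomial kernel for \AH.
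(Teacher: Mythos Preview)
Your outline tracks the paper's proof closely through the first two stages: after Reduction Rules~\ref{Reduction rule 1}--\ref{Reduction rule 4} one has $|\calH|\le db$, and grouping red points by their \emph{type} (the set of at most $d$ hyperplanes containing them) gives at most $O((db)^d)$ types. The divergence, and the gap, is in the third stage.

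Your plan is to cap each type at $k_r+1$ points and then argue that $k_r$ itself may be assumed polynomial in $b$. You correctly flag this last step as the obstacle, and in fact it does not go through as written: capping at $k_r+1$ leaves $|R|=O((db)^d k_r)$, and the heuristic you sketch for bounding $k_r$ (``if $k_r$ exceeds the capped cost of an optimal cover, declare YES'') is circular, since deciding that comparison is the problem itself. No polynomial-in-$b$ a priori bound on $k_r$ is available.

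The paper sidesteps this entirely. Instead of capping multiplicities, it replaces each type by a single weighted red point carrying the multiplicity as its weight, obtaining a \textsc{Weighted} \AH instance with $O((db)^d)$ red points. The weights are at most $k_r$, but one may assume $k_r\le 2^{|\calH|}\le 2^{db}$, since otherwise the brute-force $O^*(2^{|\calH|})$ algorithm already runs in polynomial time; hence every weight is encodable in $db$ bits, and the weighted instance has total bit-size $O((db)^{d+1})$. Finally, because \textsc{Weighted} \AH is in NP and \AH is NP-hard, there is a polynomial-time many-one reduction from the weighted problem back to \AH, yielding an unweighted kernel of size polynomial in $b$. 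This weighted detour combined with the NP-completeness reduction is precisely the missing ingredient; it removes the need to bound $k_r$ by a polynomial in $b$ at all.
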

	
\begin{proof}
	
	Assume, we cannot apply reduction rules anymore on $(R \cup B, \mathcal{H},k_r)$, every hyperplane contains at least one blue point. One blue point is covered by at most $d$-hyperplanes. Then the number of hyperplanes $\mathcal{H}$ is at most $db$. To bound the number of red points we reduce an instance of the \AH problem to an instance of the \textsc{Weighted} \AH problem as follows:
		
		 The family of hyperplanes and the set of blue points remain the same as in the original instance. 
		 
		 Let $R_d \subseteq R$ be the set of red points that lie in the intersection of $d$ hyperplanes in $\calH$. For all $r \in R_d$, assign $w(r) =1$. Since any set of $d$ hyperplanes in $\mathbb{R}^d$ intersect at a point, $|R_d| \leq h^d$, where $h$ is the number of hyperplanes. 
	
		 For $i =1 $ to $d-1$, we perform a reduction as follows. For all subsets of exactly $i$ hyperplanes in $\calH$, say,  $H_i =\{h_1, h_2, \dots, h_i\}$, consider the set of red points that is contained in all hyperplanes of $H_i$ and no other hyperplane in $\calH$. In the reduced instance, delete all but one of these red points and assign a weight equal to the number of deleted points $+ 1$. 
		 
		It easy to see that the \AH problem instance has a solution of size $k$ if and only if the \textsc{Weighted} \AH problem instance has a solution of total weight $k$.
		
		In the reduced instance,  the number of red points that lie in the intersection of exactly $i$ hyperplanes is $O(h^i)$. Therefore the total number of red points in the reduced instance is $O(h^d+h^{d-1} +h^{d-2}+\dots+h) = O(h^d) = O((db)^{d})$. 
Now we bound the weight of each red point in the reduced instance. By Reduction rule~\ref{Reduction rule 4}, all hyperplanes in $\calH$ contain at most $k_r$ red points. Therefore, the weight of any red point is at most $k_r$. We see that we need at most $h$ bits to encode the weight of a red point. If not, $k_r > 2^h$. Note that there exists a brute force algorithm that solves the problem in $O^*(2^h)$ time. If $2^h < k_r$, this is a polynomial time algorithm. Hence it follows that the weight can be encoded using $h \leq db$ bits. Hence the size of the reduced instance is $O((db)^{d+1})$.

%		 In the WEIGHTED \AH instance, there are $H$ hyperplanes, at most ${H}^d$ blue points and at most $H^{d} +H+ H^{d-1} +H +H^{d-2}+H,..etc  $ red points. For each hyperplane $H$, the value of $c(H)$ is at most $k_r$.
%		We assume that $k_r \leq dH$. Then we can represent the weights $c(H)$ by at most $H$ bits. Thus, the reduced instance has size bounded by $O(^d)$, i.e, $O({db}^d)$ . Otherwise
%		$( k_{r} > d^H)$, then $R > d^{H}$ and the parameterized algorithm for \AH running in time $O(2^{H} (|U|+|F|))= O(4^{db} (|U|+|F|))$ runs in polynomial time. 
%		
		There exists a  polynomial-time many-one reduction from \textsc{Weighted} \AH problem to \AH problem. Thus, we obtain a polynomial-size kernel for \AH parameterized by $b$.

\end{proof}

%\begin{corollary}
%	If $(R \cup B, \mathcal{H},k_r)$ is yes  instance and none of the reduction rules are applicable to the  Instance, then $|\mathcal{H}'|=2b$, $|R|={2b}^2$, and $|k_r|={2b}^2$ in ${\rm I\!R}^2$
%\end{corollary}
%
%

\subsection{Parameterizing by $h$}
\label{hyp}
\vspace{.09cm}
We design a parameterized algorithm and a kernel for the \AH problem when parameterized by the number of hyperplanes, $h$. We enumerate all possible subsets of $\mathcal{H}$ and for each subset, we check in polynomial time whether it covers all blue points and at most $k_r$ red points. The algorithm runs in time $O^*(2^{h} (|U | + |F |)$. 
%The same algorithm works in two dimension also. Thus the running time is the same in ${\rm I\!R}^2$ also.

\vspace{0.1cm}
\noindent Now we present a polynomial size kernel to bound the number of blue points we use the following reduction rules

\noindent To bound the number of blue points, we use the following reduction rule. 
\\We apply the following set of reduction rules for all $\delta = 1$ to $ d-1$ exhaustively. For $2\leq \delta \leq d-1$, we apply the reduction rule corresponding to $\delta$ only when the reduction rule corresponding to  $(\delta-1)$ is not applicable. Let $A_{\delta} =  \displaystyle \sum_{i =0}^{\delta} h^i$.\\

%\begin{corollary}
%	The problem  \AH in ${\rm I\!R}^2$ can be solved in time $O(1.61)^k$ where $k$ is the number of red points in the solution. 
%\end{corollary}

\begin{reductionrule}[$\delta$]
\label{PlaneKernel}
%\noindent \textbf{\textit{Reduction rule 15.} } 
%If a hyperplane $H$  contain more than $(k_r +1)$ blue points, then delete all but $(k_r +1)$.
Repeat for all $D' \subset [d]$ such that $|D'| = d-\delta$. 
\\Let $B' \subseteq B$ be the set of blue points such that for all $p,q \in B'$, $x_i(p) = x_i(q)$ for all $i \in D'$. If $|B'| > A_{\delta}$ then delete all but $A_{\delta}+1$ points of $B'$.
\end{reductionrule}
\begin{lemma}
Reduction Rule \ref{PlaneKernel} ($\delta$) is safe.
\end{lemma}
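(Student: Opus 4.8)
The plan is to mimic closely the safety proof of Reduction Rule \ref{SameCoordPoints}$(\delta)$, replacing the budget-based count $\delta!k_r^\delta$ with the hyperplane-based count $A_\delta=\sum_{i=0}^{\delta}h^i$. Concretely, I would fix a subset $D'\subset[d]$ with $|D'|=d-\delta$ and a corresponding set $B'\subseteq B$ of blue points agreeing on all coordinates indexed by $D'$, with $|B'|>A_\delta$. The claim to establish is that in \emph{any} solution family $\mathcal{H}'$ (here we do not even need a budget on red points, only that the rules \ref{Reduction rule 1}--\ref{Reduction rule 4} have been applied, so every hyperplane contains at least one red point), all points of $B'$ must be covered by a single common hyperplane; once this is shown, deleting all but $A_\delta+1$ of them is clearly safe, since the retained points force into the solution exactly the hyperplane(s) that would have been forced by the full set $B'$.

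The key steps, in order. First, assume for contradiction that no single hyperplane of $\mathcal{H}'$ contains all of $B'$. Then $B'$ is covered by a subfamily $\mathcal{H}_1\subseteq\mathcal{H}'$, none of whose members is parallel to the $i$-th coordinate axis for any $i\in D'$ (a hyperplane parallel to the $i$-th axis for $i\in D'$ that meets $B'$ would, since all of $B'$ shares the $x_i$-coordinate, contain all of $B'$, contradicting the assumption — I should phrase this carefully: a hyperplane perpendicular to axis $j$ is the set $\{x_j=c\}$; being "parallel to the $i$-th axis for all $i\in D'$" means it is perpendicular to some axis $j\notin D'$). WLOG take $D'=\{1,\dots,d-\delta\}$, so every hyperplane in $\mathcal{H}_1$ is of the form $\{x_j=c\}$ for some $j\in\{d-\delta+1,\dots,d\}$. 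Second, bound $|\mathcal{H}_1|$: for each fixed $j$ in that range, parallel hyperplanes $\{x_j=c\}$ cannot share a red point, and each contains at least one red point, but here — unlike in Rule \ref{SameCoordPoints} — there is no $k_r$ bound, so instead I count by blue points: actually the cleaner route is to induct on $\delta$ using the hyperplane count $h$ directly. Let me restructure: for $\delta=1$, $D'$ has $d-1$ coordinates fixed, so all of $B'$ lies on a common axis-parallel line, and any hyperplane meeting two points of $B'$ must be the unique hyperplane perpendicular to the one free axis through them — wait, that is automatically the common hyperplane. Hmm, the base case needs $A_1 = 1+h$; the point is that the number of distinct hyperplanes perpendicular to the free axis is at most $h$, plus possibly one hyperplane perpendicular to... no. Let me reconsider: with $\delta=1$ the free coordinate is a single index $j\notin D'$; the hyperplanes covering $B'$ that are not the "common" one (perpendicular to $x_j$) must be perpendicular to some $x_i$, $i\in D'$ — but such a hyperplane contains all of $B'$, so it \emph{is} a common hyperplane. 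So if no common hyperplane exists, all covering hyperplanes are perpendicular to $x_j$, there are at most $h$ of them, and each contains... at most how many points of $B'$? Each such hyperplane $\{x_j=c\}$ pins down all $d$ coordinates of any point of $B'$ it contains (the $d-1$ from $D'$ plus $x_j=c$), so it contains at most one point of $B'$; hence $|B'|\le h = A_1 - 1 < A_1$, contradiction. Good — so the base case works and the inductive step should analogously give: $\mathcal{H}_1$ has at most $\sum_{j\notin D'}h = \delta h$ members...

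Actually I expect the main obstacle to be getting the arithmetic of the bound $A_\delta=\sum_{i=0}^\delta h^i$ to come out exactly, so let me sketch the inductive step carefully as the heart of the argument. Suppose the rule for $\delta-1$ is inapplicable. Take $H\in\mathcal{H}_1$, say $H=\{x_j=c\}$ with $j\notin D'$. Then every $p\in H\cap B'$ satisfies $x_i(p)=$ const for $i\in D'\cup\{j\}$, a set of $d-\delta+1 = d-(\delta-1)$ indices; hence $H\cap B'$ is one of the sets to which Rule \ref{PlaneKernel}$(\delta-1)$ applies, so $|H\cap B'|\le A_{\delta-1}$. Next bound $|\mathcal{H}_1|$: its members are perpendicular to one of the $\delta$ axes outside $D'$, and for each such axis the hyperplanes perpendicular to it are pairwise parallel; here is where I must be careful about whether to bound the count of parallel hyperplanes by $h$ (trivially, there are at most $h$ hyperplanes total) — yes, simply $|\mathcal{H}_1|\le h$. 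Then $|B'|\le |\mathcal{H}_1|\cdot A_{\delta-1}\le h\cdot A_{\delta-1}$. But $h\cdot A_{\delta-1} = h\sum_{i=0}^{\delta-1}h^i = \sum_{i=1}^{\delta}h^i = A_\delta - 1 < A_\delta$, contradicting $|B'|>A_\delta$. Hence a common hyperplane exists, the deletion is safe, and the induction closes. The one subtlety I want to double-check is the direction of the inductive dependence (we apply rule $\delta$ only when $\delta-1$ is inapplicable, which is exactly what licenses the bound $|H\cap B'|\le A_{\delta-1}$), and the edge convention that "parallel hyperplanes cannot share a red point" is not even needed in this version — the count is purely by $h$ — so the proof is actually slightly simpler than that of Rule \ref{SameCoordPoints}$(\delta)$.
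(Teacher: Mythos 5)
Your proof is correct and follows essentially the same route as the paper's: assume no single hyperplane of the solution contains all of $B'$, observe that the covering hyperplanes must then be perpendicular to axes outside $D'$, bound the points of $B'$ per such hyperplane via the inapplicability of Rule~\ref{PlaneKernel}$(\delta-1)$, bound the number of hyperplanes by $h$, and contradict $|B'|>A_\delta$. Your write-up even closes the arithmetic ($h\cdot A_{\delta-1}=A_\delta-1$) a bit more cleanly than the paper's own count and makes the $\delta=1$ base case explicit, but the argument is the same.
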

\begin{proof}
The proof follows from the fact that in any solution $\mathcal{H'}$, all the points of $B'$ are covered by the same hyperplane. Assume, for contradiction, that no hyperplane in $\mathcal{H'}$ contains all points in $B'$. Thus $B'$ is covered by a set of at most $h$ hyperplanes, say $\mathcal{H}_1 \subseteq \mathcal{H'} $ such that no  hyperplane in $\mathcal{H}_1$ is parallel to the $i^{th}$ co-ordinate axis, for all $i \in D'$. Without loss of generality, assume that $D' = \{1,2,\dots,d-\delta\}$. 
%We know that every hyperplane contains at least one red point (by Reduction Rule~\ref{Reduction rule 2}) and parallel hyperplanes cannot share a red point. Therefore, $\mathcal{H}_1$ can have at most $k_r$ hyperplanes that are parallel to $j^{th}$ co-ordinate axis, for all $d-\delta+1 \leq j \leq d$. Thus $\mathcal{H}_1$ contains at most $k_r.\delta$ hyperplanes. 

Let $H$ be a hyperplane in $\mathcal{H}_1$ that is parallel to the $j^{th}$ co-ordinate axis for $j \notin D'$. Then for any two points $p,q \in H \cap B'$, $x_j(p)=x_j(q)$ and $x_i(p) =x_i(q)$ for $1\leq i \leq d-\delta$.  Therefore any hyperplane in $\mathcal{H}_1$ can contain at most $A_{\delta-1}+1$  points from $B'$.  Otherwise, Reduction Rule \ref{PlaneKernel} $(\delta-1)$ can be applied, which is a contradiction. 

Thus $\mathcal{H}_1$ covers at most $A_{\delta}$ points. This contradicts our assumption that $\mathcal{H}_1$ covers all points of $B'$. 

\end{proof}	

%\begin{lemma}
%	If  $(R \cup B, \mathcal{H},k_r)$ is yes  instance and none of the reduction rules are applicable to the  Instance, then $|B|= h^d$  in ${\rm I\!R}^d$.
%	% $|R|={h}^{d+1}$, and $|k_r|={h'}^{d+1}$ in ${\rm I\!R}^d$.
%\end{lemma}

\begin{thm}
The \AH problem parameterized by $h$ admits a polynomial kernel.
\end{thm}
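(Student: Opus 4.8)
The plan is to follow the template of the kernelization for the parameter $b$ (Theorem~\ref{b_kernal}). First I would apply Reduction Rules~\ref{Reduction rule 1}--\ref{Reduction rule 4} exhaustively, so that every hyperplane contains at least one red and at least one blue point and no hyperplane contains more than $k_r$ red points. Since $h$ is the parameter we already have $|\calH|\le h$, so it only remains to bound $|B|$ and $|R|$ by polynomials in $h$.

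To bound the blue points I would use Reduction Rule~\ref{PlaneKernel}($\delta$), applied for $\delta=1,\dots,d-1$ in increasing order (its safety is already established). Once it is inapplicable, taking $\delta=d-1$ and a singleton $D'=\{i\}$ shows that any set of blue points sharing their $i$-th coordinate has size at most $A_{d-1}+1=O(h^{d-1})$. Since every axis-parallel hyperplane has the form $x_i=c$, it contains only blue points with equal $i$-th coordinate, hence at most $O(h^{d-1})$ of them; as in a yes-instance all blue points are covered by the at most $h$ available hyperplanes, this gives $|B|=O(h^{d})$ (and if $|B|$ exceeds this bound we may immediately output a trivial no-instance).

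For the red points I would reduce, exactly as in Theorem~\ref{b_kernal}, to the \textsc{Weighted} \AH problem. Red points lying in the same subset of hyperplanes are twins---every subfamily covers all of them or none---so I would discard the red points lying in no hyperplane and, for every other twin class, keep one representative whose weight equals the size of the class. Because more than $d$ distinct axis-parallel hyperplanes cannot pass through a common point, each twin class corresponds to a subset of $\calH$ of size at most $d$, so there are $\sum_{i\le d}\binom{h}{i}=O(h^{d})$ of them, and by Reduction Rule~\ref{Reduction rule 4} each weight is at most $k_r$. As in Theorem~\ref{b_kernal}, one may assume $k_r\le 2^{h}$ (indeed $k_r<n$, since otherwise it suffices to test whether $\calH$ covers $B$ at all), for otherwise the $O^{*}(2^{h})$ brute-force enumeration of subfamilies of $\calH$ already runs in polynomial time; hence every weight is encodable in $O(h)$ bits and the \textsc{Weighted} \AH instance obtained has size $O(h^{d+1})$. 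Composing with the polynomial-time many-one reduction from \textsc{Weighted} \AH back to \AH then produces an equivalent \AH instance of size polynomial in $h$, i.e., a polynomial kernel.

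The technical heart---just as for the parameter $b$---is the treatment of the red points: the blue-point bound is an immediate consequence of an already-proven reduction rule, whereas compressing the red points to a polynomial in $h$ genuinely requires the weighted reformulation together with the fact that $k_r$ may be assumed small, and one has to verify that neither the weight encoding nor the final weighted-to-unweighted reduction blows the instance size beyond a polynomial in $h$.
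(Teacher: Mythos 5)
Your proposal is correct and follows essentially the same route as the paper: bound the blue points by $O(h^d)$ via Reduction Rule~\ref{PlaneKernel} and then compress the red points through the \textsc{Weighted} \AH reformulation with the $k_r\le 2^h$ weight-encoding argument. The only difference is cosmetic---the paper, after bounding $b$ by $h^d$, simply invokes Theorem~\ref{b_kernal} as a black box, whereas you inline that argument directly in terms of $h$ (incidentally yielding a slightly smaller bound of $O(h^{d+1})$).
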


\begin{proof}	

Let $(R \cup B, \mathcal{H},k_r)$ be an instance of \AH such that none of the reduction rules are applicable. Then, if it is a YES instance, all the points in $B$ can be covered by at most $h$ hyperplanes. Any hyperplane can contain at most $A_{d-1} = O(h^{d-1})$ blue points, otherwise Reduction Rule \ref{PlaneKernel} ($(d-1))$ can be applied. Thus a YES instance can have at most $h^d$ blue points. 

Now that we have bounded $b$ as a polynomial function of $h$, the result follows from Theorem~\ref{b_kernal}.

	%Since, we cannot apply reduction rules anymore on $(R \cup B, \mathcal{H},k_r)$, Similar to reduction rule \ref{SameCoordPoints},we can bound the number of blue points by $O(h^d)$. 
%	To bound the number of red points, we can use the same reduction from WEIGHTED \AH refered in Lemma \ref{b_kernal} to \AH in parameterizing by $h$. Hence we obtain a polynomial kernel for the number of red points, $k_r$ and $R$. %\AH parameterized by $h$ $k_r$ and $R$ is bounded by polynomial function. 
\end{proof}

%\begin{corollary}

%	If $(R \cup B, \mathcal{H},k_r)$ is yes  instance and none of the reduction rules are applicable to the  Instance, then $|B|=h^2$, $|R|=h^3$, and $|k_r|=h^3$ in ${\rm I\!R}^2$.

%\end{corollary}

%\section{\textsc{Gen-\RBSC} on Axis-parallel hyperplanes}
%
%\subsection{Parameterizing by $k_r$ using axis parallel hyperplanes in $R^d$}
%\input{GenH_K_r.tex}
%
%\subsection{Parameterizing by $b$ using axis parallel hyperplanes in $R^d$}
%\input{GenH_blue.tex}
%
%\subsection{Parameterizing by $l$ using axis parallel hyperplanes in $R^d$}
%\input{GenH_line.tex}
%
%\subsection{Parameterizing by $r$ using axis parallel hyperplanes in $R^d$}
%\input{GenH_red.tex}

%
%\section{\textsc{Gen-\RBSC} on Axis-parallel lines}
%
%\subsection{Parameterizing by $k_r$ using axis parallel lines in $R^d$}
%\input{LinePara_K_r.tex}
%
%\subsection{Parameterizing by $b$ using axis parallel lines in $R^d$}
%\input{LinePara_blue.tex}
%
%\subsection{Parameterizing by $l$ using axis parallel lines in $R^d$}
%\input{LinePara_line.tex}
%\subsection{Parameterizing by $r$ using axis parallel lines in $R^d$}
%\input{Para_red.tex}

\section{Red Blue Set Cover on Skylines}
\label{Sk}
Given $P=R \cup B$ in the plane and $\mathcal{K}$, a set of $m$ bidirectional skylines, %the \RBSC{} problem is to decide whether there exists a subset $\mathcal{K}^\prime{} \subseteq \mathcal{K}$ with $|\mathcal{K}^\prime{}| \leq k$ such that every point of $B$ is contained in at least one skyline of $\mathcal{K}^\prime{}$ and the number of points of $R$ covered by $\mathcal{K}^\prime{}$ is at most $k$.
 we prove that the \AS problem parameterized by $k_r$ is W[1]-hard by giving a parameterized reduction from the \SQ problem.

\noindent{\RS}: Given a set $\mathcal{S}$ of n rectangles and a set of axis-parallel lines $\mathcal{L}$, the \RS  problem is to decide whether there exists a subset $\mathcal{L}^\prime{} \subseteq  \mathcal{L}$ with $ |\mathcal{L}^\prime{}| \leq k$ such that every rectangle from $\mathcal{S}$ is intersected (stabbed) by at least one line in $\mathcal{L}^\prime{}$. If all the rectangles are unit squares,  it is the \SQ problem. \SQ problem parameterized by $k$ is known to be W[1]-hard~\cite{Dom}.

\begin{figure}

\includegraphics[width=8.5cm]{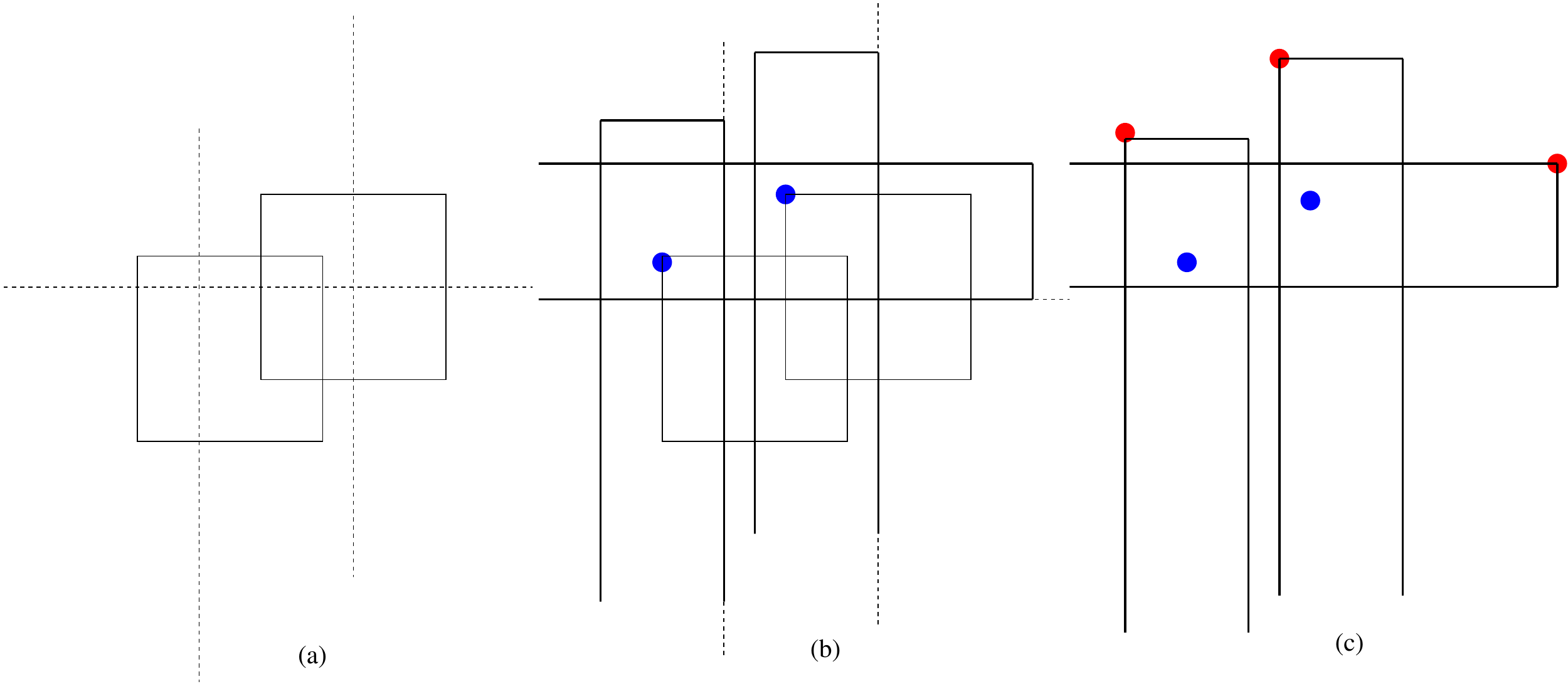}
\centering
\caption{(a)Instance of \SQ problem, (b)Reduction to an instance of \AS, (c)Instance of \AS}
\label{skyline}
\end{figure}

Let ($\mathcal{S},\mathcal{L}$) be an instance of the \SQ problem. Let \{$l_1, l_2,.. l_n$\} be the set of vertical lines in $\mathcal{L}$ which are in non-decreasing order of their $x$-coordinates and let \{$h_1, h_2,.. h_n$\} be the set of horizontal lines in $\mathcal{L}$ which are in non-increasing order of their $y$-coordinates. For every square $S \in \mathcal{S}$, add a blue point $b_S$ to $P$ at the top-left
corner of $S$. For every vertical line $l_i$, we add 
a unit width vertical skyline $S_{l_i}$ with its right edge coinciding with the given line $l_i$. The top edge of the leftmost vertical skyline $S_{l_1}$ is unit distance above
the topmost point in $P$. For $i >1$, the top edge of the skyline $S_{l_i}$ is unit vertical distance away from the top edge of $S_{l_{i-1}}$. Similarly for all horizontal lines, we add a unit width horizontal skyline $S_{h_i}$ whose bottom edge coincides with the line $h_i$. Right edge of the topmost horizontal skyline $S_{h_1}$ is a unit distance away from the rightmost point in $P$. Right edge of $S_{h_i}$ is unit horizontal distance away from the right edge of  $S_{h_{i-1}}$, for $i>1$. For every horizontal skyline, add
a red point $r_{h_i}$ to $P$ in the top-right corner of the skyline and for every vertical skyline, add
a red point $r_{l_i}$ to $P$ in the top-left corner of the skyline. Thus every skyline has a
unique red point. Refer Figure~\ref{skyline}.

\begin{lemma}
 $(\mathcal{S},\mathcal{L})$ can be stabbed using $k$ lines if and only if $(P,\mathcal{K})$ has a solution of size $k$.
\end{lemma}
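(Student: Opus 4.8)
The plan is to establish the two directions of the equivalence separately, matching stabbing lines in the \SQ instance with skylines in the \AS instance via the natural correspondence $l_i \leftrightarrow S_{l_i}$ and $h_i \leftrightarrow S_{h_i}$. The key structural fact to nail down first is the \emph{containment property}: a blue point $b_S$, placed at the top-left corner of square $S$, lies in the vertical skyline $S_{l_i}$ if and only if the vertical line $l_i$ stabs $S$, and likewise $b_S \in S_{h_j}$ if and only if $h_j$ stabs $S$. For the vertical case, $b_S$ has the $x$-coordinate of the left edge of $S$ and the $y$-coordinate of its top edge; since $S_{l_i}$ is a unit-width vertical skyline of the form (roughly) $[x(l_i)-1,\,x(l_i)]\times(-\infty,\,\text{top}_i]$ and $S$ is a unit square, $l_i$ stabbing $S$ is equivalent to $b_S$ falling in the horizontal extent of $S_{l_i}$; the staircase choice of top edges (each $S_{l_i}$ topped one unit above the previous, starting one unit above all of $P$) guarantees every $b_S$ is below the top edge of every vertical skyline, so horizontal containment is the only constraint. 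The symmetric argument, using that $b_S$ is the top-\emph{left} corner so its $y$-coordinate equals the top edge of $S$ and the horizontal skylines are right-extended in a staircase starting one unit right of all of $P$, handles $S_{h_j}$.

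Next I would record the \emph{cost property}: by construction each skyline contains exactly one red point (its designated corner point $r_{l_i}$ or $r_{h_j}$), and these red points are pairwise distinct, so any subfamily $\mathcal{K}'\subseteq\mathcal{K}$ of size $k$ covers exactly $k$ red points, and conversely a solution covering at most $k$ red points uses at most $k$ skylines. One should check the red points are not accidentally covered by other skylines — this follows because each $r_{l_i}$ sits at the top-left corner of $S_{l_i}$, strictly above and outside every other skyline by the staircase spacing, and similarly for the $r_{h_j}$; this is the one place where the careful "unit distance away" offsets in the construction earn their keep.

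With these two properties in hand the equivalence is immediate. Given a stabbing set $\mathcal{L}'$ with $|\mathcal{L}'|\le k$, take $\mathcal{K}' = \{S_{l_i} : l_i\in\mathcal{L}'\}\cup\{S_{h_j} : h_j\in\mathcal{L}'\}$; by the containment property every blue point $b_S$ is covered (a stabbing line of $S$ gives a skyline containing $b_S$), and by the cost property $\mathcal{K}'$ covers $|\mathcal{L}'|\le k$ red points. Conversely, given $\mathcal{K}'$ covering all blue points and at most $k$ red points, the cost property gives $|\mathcal{K}'|\le k$; map each chosen skyline back to its defining line to get $\mathcal{L}'$ with $|\mathcal{L}'|\le k$, and the containment property shows every square $S$ is stabbed, since $b_S$ being covered by some $S_{l_i}$ or $S_{h_j}$ in $\mathcal{K}'$ forces the corresponding line to stab $S$. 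Since the reduction is computable in polynomial time and $k$ is preserved, together with the W[1]-hardness of \SQ parameterized by $k$ this yields W[1]-hardness of \AS parameterized by $k_r$.

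The main obstacle is purely geometric bookkeeping: verifying that the staircase placement of the skyline top/right edges simultaneously (i) keeps every blue point below/left of the unbounded edge of every skyline so that only the bounded-width constraint matters, and (ii) isolates each red corner point inside its own skyline and no other. Both are consequences of the "one unit above/right of all of $P$, then unit increments" spacing, but they require spelling out the coordinate inequalities; I expect this to be the only step needing genuine care, the rest being the formal consequence of the containment and cost properties.
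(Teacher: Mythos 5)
Your proposal is correct and follows essentially the same route as the paper: the natural bijection $l_i \leftrightarrow S_{l_i}$, $h_j \leftrightarrow S_{h_j}$, the observation that a line stabs a square exactly when the corresponding skyline contains its top-left blue point, and the fact that each skyline carries its own unique red point, so $k$ skylines cover exactly $k$ red points. You merely spell out (via the containment and cost properties, including the check that no red point lies in a second skyline) the coordinate bookkeeping that the paper's four-sentence proof leaves implicit.
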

%decision
\begin{proof}
If a line $l\in \mathcal{L}$ stabs a square $S\in \mathcal{S}$, the corresponding skyline $S_l$ covers the blue point $b_S$ and the red point $r_{l_i}$. If a line $l\in \mathcal{L}$ stabs a square $S\in \mathcal{S}$, the corresponding skyline $S_h$ covers the blue point $b_S$ and red point $r_{h_i}$. Now $S$ can be stabbed using $k$ lines if and only if all blue points can be covered using $k$ skylines. Since
every skyline contains a unique red point, $S$ can be stabbed using $k$ lines if and only if $(P,\mathcal{K})$ has a solution of size $k$. Thus
we have proved the following theorem. 
\end{proof}

\begin{thm}
\label{RBSC}

The \RBSC problem on bidirectional skylines is W[1]-hard parameterized by
the solution size.

\end{thm}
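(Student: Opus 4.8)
The plan is to read the Lemma established immediately above as a parameterized reduction from \SQ to \AS and then invoke the known W[1]-hardness of \SQ parameterized by the number of stabbing lines~\cite{Dom}. First I would record that the construction preceding the Lemma — one blue point at the top-left corner of each square, one unit-width vertical skyline per vertical line of $\mathcal{L}$ and one unit-width horizontal skyline per horizontal line of $\mathcal{L}$, arranged in the two monotone staircases, and one red point per skyline — is carried out in time polynomial in $|\mathcal{S}|+|\mathcal{L}|$, and that it outputs an \AS instance $(P,\mathcal{K})$ together with the budget $k_r:=k$. Hence $(\mathcal{S},\mathcal{L},k)\mapsto (P,\mathcal{K},k_r)$ runs in FPT (indeed polynomial) time and leaves the parameter unchanged, so it is a legitimate parameterized reduction as soon as it is shown to preserve YES/NO answers.

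Answer-preservation is exactly the equivalence asserted by the Lemma, whose two directions I would spell out. Forward: if $\mathcal{L}'\subseteq\mathcal{L}$ with $|\mathcal{L}'|\le k$ stabs every square, then the skylines associated with $\mathcal{L}'$ cover every blue point — because a vertical (resp.\ horizontal) line stabs a unit square exactly when the associated unit-width skyline, placed with its right (resp.\ bottom) edge on that line, contains the square's top-left corner, which is a one-line check on the coordinates — and they cover exactly $|\mathcal{L}'|\le k$ red points since each skyline carries its own red point. Converse: if $\mathcal{K}'\subseteq\mathcal{K}$ covers all blue points while covering at most $k_r=k$ red points, then, because each skyline's red point lies in no other skyline, the number of skylines in $\mathcal{K}'$ is at most the number of red points it covers, hence at most $k$, and the $\le k$ lines generating those skylines stab all squares. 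Combining the two directions with the W[1]-hardness of \SQ parameterized by $k$ yields the theorem.

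The step I expect to need the most care is the ``each skyline has a unique red point'' property that both directions rest on: one must check from the staircase placement that $r_{l_i}$ (the top-left corner of the vertical skyline $S_{l_i}$) and $r_{h_i}$ (the top-right corner of the horizontal skyline $S_{h_i}$) each lie in their own skyline and in no other — not in another skyline of the same orientation, which is ruled out by the strictly increasing offsets of the top edges (resp.\ right edges) together with the separation of the underlying lines, and not in a skyline of the opposite orientation, which is ruled out by having pushed every horizontal skyline's right edge beyond the rightmost point of $P$ and every vertical skyline's top edge above the topmost point of $P$. Once this is verified the red-point count, the skyline count, and the line count all coincide, the reduction is tight, and the remainder is bookkeeping.
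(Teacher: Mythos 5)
Your proposal is correct and follows essentially the same route as the paper: the staircase construction giving one uniquely-red-pointed skyline per stabbing line, the two-directional equivalence of the Lemma, and the W[1]-hardness of \SQ from~\cite{Dom} transferred via this parameter-preserving polynomial reduction. Your extra care about verifying that each red point lies in its own skyline only is a welcome elaboration of a step the paper asserts without detail, but it does not change the argument.
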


Since a family of bidirectional skylines is a subset of the family of axis parallel rectangles, we have the following result.
\begin{corollary}
The \RBSC problem on axis parallel rectangles is W[1]-hard parameterized by
the solution size $k$.
\end{corollary}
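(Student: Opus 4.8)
The plan is to derive the corollary directly from Theorem~\ref{RBSC} by a trivial containment argument, exactly as the one-line remark preceding the statement suggests. First I would observe that every bidirectional skyline, being of the form $[a,b]\times(-\infty,\infty)$ or $(-\infty,\infty)\times[a,b]$ (equivalently the horizontal/vertical skylines used in the reduction, which are in turn of the form $(-\infty,x]\times[y_1,y_2]$ etc.), is itself an axis-parallel rectangle: it is the intersection of the plane with two or fewer axis-parallel half-planes, hence a (possibly unbounded) axis-parallel box. Thus any instance $(P,\mathcal{K})$ of \AS{} produced by the reduction in Section~\ref{Sk} is already, verbatim, an instance of the \RBSC{} problem on axis-parallel rectangles, with the same point set $P=R\cup B$, the same family $\mathcal{K}$ now viewed as rectangles, and the same solution-size parameter $k$.

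Next I would note that a feasible solution is preserved in both directions under this reinterpretation: a subfamily $\mathcal{K}'\subseteq\mathcal{K}$ covers all blue points and at most $k$ red points when the sets are read as skylines if and only if it does so when they are read as rectangles, because the underlying subsets of $P$ are literally unchanged. Hence the identity map is a parameterized reduction (indeed a polynomial-time many-one reduction that leaves the parameter untouched) from \AS{} parameterized by the solution size to \RBSC{} on axis-parallel rectangles parameterized by the solution size. Since Theorem~\ref{RBSC} establishes that \AS{} is W[1]-hard under this parameter, W[1]-hardness transfers to the more general problem.

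There is essentially no obstacle here: the only thing to be careful about is the quantifier direction of a reduction for hardness — we need the restricted (harder) problem to embed into the general one, not the other way around — and that is exactly what the containment ``every skyline is a rectangle'' gives us. So the corollary follows immediately, and I would state it in one or two sentences: because the skylines constructed in Theorem~\ref{RBSC} form a subfamily of axis-parallel rectangles, the W[1]-hardness of \RBSC{} on bidirectional skylines parameterized by $k$ implies the W[1]-hardness of \RBSC{} on axis-parallel rectangles parameterized by $k$.
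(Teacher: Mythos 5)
Your proposal is correct and follows essentially the same route as the paper: the corollary is obtained by observing that (bidirectional) skylines are a special case of axis-parallel rectangles, so the hard instances from Theorem~\ref{RBSC} are verbatim instances of the rectangle problem and the identity map is the parameterized reduction. Your explicit remark about the direction of the reduction (restricted problem embeds into the general one) is a fine clarification but not a departure from the paper's argument.
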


%------------------------------------------

\section{Quadrants}
\label{Qdr}

In this section, we give a polynomial time algorithm to optimally solve the \AQ problem. Given $P= R \cup B$ and a family $\mathcal{Q}_1$ of quadrants of the first type with $|\mathcal{Q}_1| = m$, find a subfamily $\mathcal{Q}'_1$ that covers all points from $B$ and minimum possible number of red points from $R$.

\begin{observation}
	\label{Qb}
	
	Let $q_i$ and $q_j$ be two quadrants such that $(q_i \cap P) \subseteq (q_j \cap P)$ i.e., $(q_i \cap B) \subseteq (q_j \cap B)$ and $(q_i \cap R) \subseteq (q_j \cap R)$ . Then there exist a optimal solution of \AQ , that does not contain both $q_1$ and $q_2$.
%	 Then both $q_1$ and $q_2$ cannot be part of an optimal solution of \AQ.
\end{observation}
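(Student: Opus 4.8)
The plan is to prove this by a straightforward domination (redundancy) argument: starting from \emph{any} optimal solution $\mathcal{Q}'_1 \subseteq \mathcal{Q}_1$, I will show that if it happens to contain both $q_i$ and $q_j$, then $\mathcal{Q}'_1 \setminus \{q_i\}$ is again a feasible solution whose number of covered red points does not increase, and is therefore also optimal. Since this modified solution omits $q_i$, it cannot contain both $q_i$ and $q_j$, which gives the claim.

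The first step is to check feasibility after the deletion. The hypothesis $(q_i \cap P) \subseteq (q_j \cap P)$ implies in particular $(q_i \cap B) \subseteq (q_j \cap B)$, so every blue point lying in $q_i$ already lies in $q_j$. Because $q_j$ still belongs to $\mathcal{Q}'_1 \setminus \{q_i\}$, every blue point that was covered by $\mathcal{Q}'_1$ is still covered; hence $\mathcal{Q}'_1 \setminus \{q_i\}$ covers all of $B$.

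The second step is to compare the red cost. The set of red points covered by $\mathcal{Q}'_1 \setminus \{q_i\}$ equals $R \cap \bigcup_{q \in \mathcal{Q}'_1 \setminus \{q_i\}} q$, which is a subset of $R \cap \bigcup_{q \in \mathcal{Q}'_1} q$; deleting a quadrant can only shrink the union, so the number of covered red points does not increase. Combining the two steps, $\mathcal{Q}'_1 \setminus \{q_i\}$ is feasible and has red cost at most that of $\mathcal{Q}'_1$; since $\mathcal{Q}'_1$ was optimal, $\mathcal{Q}'_1 \setminus \{q_i\}$ is optimal too, and it avoids containing both $q_i$ and $q_j$. (Note that only the blue part of the containment hypothesis is needed for feasibility, while monotonicity of the red count under deletion is automatic; the full containment $q_i\cap P \subseteq q_j \cap P$ is what makes the bookkeeping clean.)

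I do not expect any real obstacle here: the observation is a simple exchange argument and the only point requiring a moment's care is that removing a set from a feasible cover cannot \emph{increase} the red cost, which is immediate. If desired, one can iterate this argument to conclude further that there is an optimal solution in which the chosen quadrants form an antichain with respect to containment of their point sets, which is the form in which the observation is typically used.
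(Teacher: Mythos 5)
Your argument is correct and is exactly the standard exchange argument the authors intend; the paper itself states this as an observation without proof, and your two steps (feasibility preserved since $q_i\cap B\subseteq q_j\cap B$ and $q_j$ remains, red cost non-increasing since the union only shrinks) fill that gap cleanly. No issues.
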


\begin{thm}
	For a given bichromatic point set with $n$ points, \AQ can be computed in $O(mn^2)$ time. 
\end{thm}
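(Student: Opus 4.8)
The plan is to give a dynamic programming algorithm over quadrants of the first type sorted by their defining corner. Recall a quadrant of the first type is $[x,+\infty)\times[y,+\infty)$, so it is determined by its corner point $(x,y)$; as we decrease $x$ the quadrant ``opens up'' to the left, and similarly for $y$. First I would sort the $m$ quadrants in $\mathcal{Q}_1$ in non-decreasing order of their $x$-corner, say $q_1,q_2,\dots,q_m$ with $x(q_1)\le x(q_2)\le\cdots\le x(q_m)$ (breaking ties by $y$-corner); this sorting costs $O(m\log m)$. The key structural fact, which follows from Observation~\ref{Qb}, is that among quadrants with the same $x$-corner only the one with the smallest $y$-corner is ever useful, and more generally if $q_i$ dominates $q_j$ (contains all the same points) we never need both. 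So after a cleanup pass I may assume $q_1,\dots,q_m$ is an antichain under containment on $P$, hence sorted by increasing $x$-corner the $y$-corners are strictly decreasing.

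Next I would set up the DP. For a chosen subfamily $\mathcal{Q}'_1$, the union of the quadrants is a ``staircase'' region, and which blue points are covered depends only on the ``upper-left envelope'' of the chosen corners. The idea is to process the sorted quadrants left to right and maintain, as the DP state, the index of the most recently chosen quadrant together with how far ``up'' the coverage currently reaches, i.e. the minimum $y$-corner among quadrants chosen so far. Because the $y$-corners are monotone along the sorted order, this state collapses to a single index $i$: ``the last quadrant chosen is $q_i$''. Define $f(i)$ to be the minimum number of red points covered by a subfamily of $\{q_1,\dots,q_i\}$ that (a) includes $q_i$ and (b) covers every blue point of $B$ lying in $q_1\cup\cdots\cup q_i$ (equivalently, every blue point whose $x$-coordinate is $<\! $ all larger corners, handled by the sorted order). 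The recurrence is $f(i)=\min_{j<i}\bigl(f(j)+\Delta(j,i)\bigr)$, where $\Delta(j,i)$ is the number of red points covered by $q_i$ but not already covered by $q_1\cup\cdots\cup q_j$ restricted to the chosen family ending at $q_j$ — and here is where monotonicity pays off: the coverage of a family ending at $q_j$ is determined by $q_j$ alone (its $x$-corner bounds coverage on the right, its $y$-corner on the top), so $\Delta(j,i)$ depends only on $j$ and $i$ and can be computed directly. One must also ensure the blue points strictly between the $x$-corners of $q_j$ and $q_i$ (those that $q_i$ covers but $q_j$ does not) are indeed covered by $q_i$; this is an easy geometric check that forbids certain $(j,i)$ pairs. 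The answer is $\min_i f(i)$ over all $i$ such that the family witnessed by $f(i)$ covers all of $B$, plus a base case treating the leftmost chosen quadrant; if no single subfamily covers $B$, output ``infeasible''.

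For the running time: there are $O(m)$ states; each recurrence evaluates $O(m)$ predecessors; and each transition requires computing $\Delta(j,i)$ and the feasibility check, which naively is $O(n)$ by scanning the points classified against the corners of $q_j$ and $q_i$. That gives $O(m^2 n)$, which is not yet the claimed bound. To reach $O(mn^2)$ I would instead, for each quadrant $q_i$, precompute for every point $p\in P$ the relevant containment relations — but a cleaner route is to observe $m=O(n^2)$ is not assumed, so the intended bound $O(mn^2)$ suggests the DP is actually organized with the $n$ points (not the $m$ quadrants) controlling one of the loop dimensions: e.g. process coverage in order of the blue points' $x$-coordinates, and at each of the $O(n)$ ``breakpoints'' decide which quadrant, if any, to add, maintaining for each of the $O(n)$ relevant $y$-levels the best cost — giving $O(n)\times O(n)$ states and an $O(m)$ transition, i.e. $O(mn^2)$. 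I would pick whichever of these two organizations cleanly hits $O(mn^2)$ and prove correctness by the standard exchange argument: given any optimal solution, repeatedly apply Observation~\ref{Qb} to remove dominated quadrants, then show the surviving quadrants, sorted, realize a path through the DP of no greater red-cost.

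The main obstacle I anticipate is getting the DP state small enough while keeping transitions correct: the coverage of a staircase of quadrants is genuinely two-dimensional, and the whole argument rests on the monotonicity (after the domination cleanup) that lets ``the last quadrant chosen'' summarize the entire history. Verifying that this summary is lossless — in particular that a blue point covered only by some earlier, more ``open'' quadrant is never dropped when we only remember the last one — is the delicate step; it works precisely because, along the sorted order, the $y$-corners decrease, so the last quadrant chosen is the most generous in the $y$-direction and the first chosen is the most generous in the $x$-direction, and together with the feasibility checks on intermediate blue points this captures exactly the covered set. Pinning down that claim, and then doing the bookkeeping to count red points without double-counting (the $\Delta(j,i)$ term), is where the real work lies; the complexity accounting is then routine.
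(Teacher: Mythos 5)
Your proposal does not yet amount to a proof, for two reasons. First, the organization you actually work out (a DP over quadrants with state ``last quadrant chosen'') rests on an unjustified cleanup step: you use Observation~\ref{Qb} to delete dominated quadrants from $\mathcal{Q}_1$ so that the surviving corners form a monotone staircase. The observation only says that \emph{some} optimal solution avoids using both quadrants of a nested pair; it does not license discarding either one from the family --- the quadrant with the smaller trace may be exactly the cheap one an optimal solution wants (it covers fewer red points), while the larger one may be the only quadrant covering some blue point, so neither deletion is safe. (This is repairable: keep the whole family and only allow DP transitions between corner-incomparable pairs, which is justified because an optimal solution may be assumed to be an antichain under trace containment; but that is not what you wrote, and your local feasibility condition is also stated backwards --- blue points whose $x$-coordinates lie between the two corners must be covered by the \emph{earlier} quadrant $q_j$, not by $q_i$.) Moreover, even once repaired, this organization runs in $O(m^2 n)$, not the claimed $O(mn^2)$, as you yourself note.

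Second, the organization that would achieve the claimed bound --- and which is essentially the paper's --- is left as a one-sentence sketch (``process blue points by $x$-coordinate, maintain for each $y$-level the best cost''), with no precise state, no recurrence, no accounting of red points, and an explicit deferral (``I would pick whichever of these two organizations cleanly hits $O(mn^2)$''). The missing ingredients are concrete. The paper first restricts $B$ to the blue points on the left-bottom chain of the orthogonal convex hull of $B$: any family of first-type quadrants covering these chain points automatically covers every point to their upper right, and after this reduction the blue points are simultaneously increasing in $x$ and non-increasing in $y$, so ``the next uncovered blue point'' is well defined. It then defines $dp[i,j]$ indexed by two \emph{point} levels (a decreasing-$y$ level $i$ and an increasing-$x$ level $j$), and the recurrence tries every quadrant $Q$ through the topmost uncovered chain point, recursing from $(bottom(Q),left(Q))$ and charging only $|Q\cap R_i|$, which is exactly what prevents double counting of red points. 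This gives $O(n^2)$ states with an $O(m)$-time transition, hence $O(mn^2)$. Without the chain reduction and this explicit recurrence, the ``delicate step'' you flag --- that a small state losslessly summarizes the coverage history --- is precisely what remains unproven in your write-up.
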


\begin{proof}
	%We apply reduction rules \ref{RR1,RR2,Qb} exhaustively on the instance $(R \cup B, \mathcal{Q}_1,k)$. 
	Let $\mathcal{O}$ be the orthogonal convexhull of $B$. Consider the left-bottom chain, $P_{lb}$, of $\mathcal{O}$ i.e., the chain of $\mathcal{O}$ that extends from the blue point with the minimum $x$-coordinate to the blue point with the minimum $y$-coordinate, in the anti-clockwise direction.  
	%Let $P_{lb}$ be the left-bottom chain in $\mathcal{O}$ defined by the orthogonal path from $b_1$ to $b_l$, where $b_1 \in B$ with lowest $x$-coordinate and $b_l \in B$ with lowest $y$-coordinate. 
	Let $B'=\{b_1, b_2, \cdots b_l\}$  be the set of blue points on $P_{lb}$ ordered by increasing $x$-coordinate. Note that this is also a non increasing order of $y$-coordinate. Let $R' \subseteq R$ and $B'' \subseteq B\setminus B'$ respectively be the set of red and blue points that lie on the right side of $P_{lb}$ (See figure \ref{Q1}). Any set of quadrants that cover all points in $B'$ will cover all points in $B''$ and $R'$. 
	%All $B \backslash B'$ blue points and $R'$ set of red points are lie on the right side of $P_{lb}$. Therefore it is safe to consider $B'$ instead of $B$ and $R= R \backslash R'$ with budget $k=k - |R'|$. 
	Therefore it is enough to consider the \AQ problem for $(R=(R\backslash R') \cup (B=B'), \mathcal{Q}_1)$. 
	
	%In the new instance $(R \cup B, \mathcal{Q}_1)$ we consider the red points ordered from top to bottom, say $R=\{r_1, r_2, \dots , r_{|R|}\} $. We use the notations $B_i$ and $R_j$ respectively to denote the sets $\{b_i, b_{i+1},\dots,b_{|B|}\}$ and $\{r_j,r_{j+1},\dots,r_{|R|}\}$, for $1\leq i \leq |B|, 1\leq j \leq |R|$. For every $q$ is defined by $l(q)$ and $b(q)$ points, where $l(q)$ is the $x$  index of the point that lies in the left boundary of $q$ and $b(q)$ is the $y$  index of the point that lies in the bottom boundary of $q$. Let $Q_1$ denote the set of $\{q_1, q_{2},\dots,q_{m}\}$, where $l(q_i) < l(q_j)$, for all $i < j$.

	For this reduced instance $(R \cup B, \mathcal{Q}_1)$, we introduce the following notations. Let $\pi: R \cup B \longmapsto [n]$ be the bijection corresponding to the ordering of the points in $R \cup B$ by increasing $x$ coordinates and $\sigma: R \cup B \longmapsto [n]$ be the bijection corresponding to the ordering of the points in $R \cup B$ by decreasing $y$ coordinates. For a quadrant $Q \in \mathcal{Q}_1$, let $p \in Q\cap(R\cup B) $ be the point with smallest $x$ coordinate in $Q$ and 
	let $q \in Q\cap(R\cup B) $ be the point with the smallest $y$ coordinate in $Q$. Define $left(Q) =\pi[p]$ and $bottom(Q)=\sigma(q)$. Let $B_i = \{b \in B| \sigma(b)>i\}$, $R_i = \{ r \in R| \sigma(r)>i \}$ and $\mathcal{Q}_j =\{Q\in \mathcal{Q}_1| left(Q)>j\}$.

	%where $l(q)$ is the $x$  index of the point that lies in the left boundary of $q$
	
	%where $l(q)$ is the $x$ coordinate value of the vertical line defined by $q$ and $b(q)$ is the $y$ coordinate value of the horizontal line defined by $q$
	
	Now we give a dynamic programming algorithm to solve the \AQ problem for this instance.
	% W.l.o.g lets consider $q_i$ be the quadrant that cover $b_i$ and $r_k$ be the bottom most point contained in $q_i$.
	
	%Let $dp[i, j]$ returns the minimum number of red points from $R_{j+1}$  to be covered  by a subset of $Q_1$ that covers all points in  $B_i$.  Thus the desired solution is returned by $dp[1,0]$. We now give a recurrence relation to compute $dp[i, j]$.
	%  in is the optimal way to cover $b_i$ to $b_l$ and it considers all red points that lie below $r_k$. The recurrence as follows, 
	
	Let $dp[i, j]$ return the minimum number of red points from $R_i$, that is covered by a subset of $\mathcal{Q}_j$ that covers all points in $B_i$.  $dp[0, 0]$ return the value $0$. 
	% Then $dp[ ~ ]$ can be given as follows.

	% and cover all blue points with $y$ index value greater than $i$ using the quadrants with $left(Q) > j$.
	
	% from $R_{j+1}$  to be covered  by a subset of $Q_1$ that covers all points in  $B_i$.  Thus the desired solution is returned by $dp[1,0]$. We now give a recurrence relation to compute $dp[i, j]$.

	\begin{figure}[tbh]
		\centering
		%	\scalebox{0.25}{\input{Quadrants.pdf}}
		\includegraphics [width=2\linewidth]{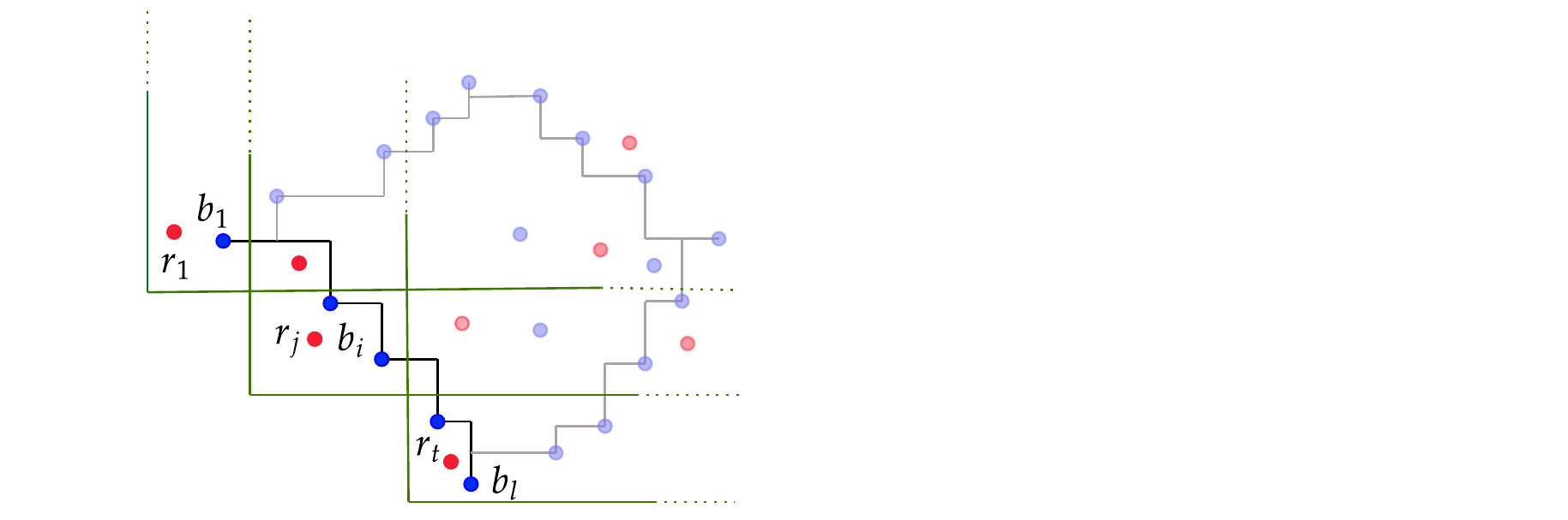}
		\caption{Green line denotes quadrants of the first type and highlighted black orthogonal path indicates left-bottom orthogonal path}
		\label{Q1}
	\end{figure}
	\begin{equation}
	dp[i,j]= \min_{Q\in \mathcal{Q}_{j}, b'\in Q} \big\{ dp[bottom(Q), left(Q)] +|Q \cap R_{i}| \}
	\end{equation}
	
\noindent Here $b'$ is the blue point with the smallest value of  $\sigma(b')$ in $B_i$.
	
	To see the correctness of the above, observe that the recurrence considers all $Q \in \mathcal{Q}_j$ to cover $b'$. For the correct guess, the rest of the solution only covers uncovered red points from $R_{bottom(Q)}$ using $Q \in \mathcal{Q}_{left(Q)}$ . Otherwise, if some $Q \notin \mathcal{Q}_{left(Q)}$ is part of the solution then it contradicts Observation~\ref{Qb}. Similarly an uncovered red point outside $R_{bottom(Q)}$ is not part of an optimal solution. 
	%Any quadrant $q_1$ that covers a red point above $r_j$ that is not covered by $q$ and $b_{i+1}$, contains $q$ inside. By Observation~\ref{Qb}, $q_1$ cannot be in the solution. 
	
%	Assume, for contradiction that a red point $r$ such that $r \notin Q$ and $r$ lies above $r_{i}$ is covered by another quadrant $Q'$ in the solution. Then clearly, $r$ lies to the left of $r_{i}$. Now, $Q'$ covers $r$ and $Q'$ covers at least one blue point that lies below $r_{i}$. Therefore, $Q$ and $Q'$ are nested quadrants, which contradicts Observation~\ref{Qb}.
%	
	It is clear that computing one entry in the DP table takes $O(m)$ time. The number of $dp[ *, *]$ values to be computed is clearly $O(n^2)$. We can construct and verify the orthogonal path in $O(n \log n)$ time. 
	
Therefore, the algorithm runs in $O(mn^2)$ time.
	
\end{proof}

%\newpage

%-----------------------------------------------

	%% The Appendices part is started with the command \appendix;
	%% appendix sections are then done as normal sections
	%% \appendix
	
	%% \section{}
	%% \label{}
	
	%% If you have bibdatabase file and want bibtex to generate the
	%% bibitems, please use
	%%
	%%  \bibliographystyle{elsarticle-num-names} 
	%%  \bibliography{<your bibdatabase>}
	
	%% else use the following coding to input the bibitems directly in the
	%% TeX file.
\end{document}